\documentclass[11pt]{article}%
\usepackage{amsmath}
\usepackage{amssymb}
\usepackage{amsfonts}

\usepackage{cite}
\usepackage{graphicx}
\usepackage{float}
\usepackage{longtable}
\usepackage[utf8]{inputenc}
\usepackage{hyperref}

\usepackage{algorithmicx}
\usepackage[ruled,vlined]{algorithm2e}
\SetAlgoCaptionSeparator{ }

\usepackage{fullpage}

\newcommand{\fref}[1]{Fig.~\ref{#1}}

\newcommand{\tref}[1]{Table~\ref{#1}}
\newcommand{\sref}[1]{Section~\ref{#1}}

%
%

%

%
%

\providecommand{\U}[1]{\protect\rule{.1in}{.1in}}
\newtheorem{theorem}{Theorem}

\newtheorem{problem}{Problem}

\newenvironment{proof}[1][Proof]{\textbf{#1.} }{\ \rule{0.5em}{0.5em}}

\begin{document}

\title{Optimizing High-Efficiency Quantum Memory with Quantum Machine Learning for Near-Term Quantum Devices}
\author{Laszlo Gyongyosi\thanks{School of Electronics and Computer Science, University of Southampton, Southampton SO17 1BJ, U.K., and Department of Networked Systems and Services, Budapest University of Technology and Economics, 1117 Budapest, Hungary, and MTA-BME Information Systems Research Group, Hungarian Academy of Sciences, 1051 Budapest, Hungary.}
\and Sandor Imre\thanks{Department of Networked Systems and Services, Budapest University of Technology and Economics, 1117 Budapest, Hungary.}
}

\date{}

\maketitle
\begin{abstract}
Quantum memories are a fundamental of any global-scale quantum Internet, high-performance quantum networking and near-term quantum computers. A main problem of quantum memories is the low retrieval efficiency of the quantum systems from the quantum registers of the quantum memory. Here, we define a novel quantum memory called high-retrieval-efficiency (HRE) quantum memory for near-term quantum devices. An HRE quantum memory unit integrates local unitary operations on its hardware level for the optimization of the readout procedure and utilizes the advanced techniques of quantum machine learning. We define the integrated unitary operations of an HRE quantum memory, prove the learning procedure, and evaluate the achievable output signal-to-noise ratio values. We prove that the local unitaries of an HRE quantum memory achieve the optimization of the readout procedure in an unsupervised manner without the use of any labeled data or training sequences. We show that the readout procedure of an HRE quantum memory is realized in a completely blind manner without any information about the input quantum system or about the unknown quantum operation of the quantum register. We evaluate the retrieval efficiency of an HRE quantum memory and the output SNR (signal-to-noise ratio). The results are particularly convenient for gate-model quantum computers and the near-term quantum devices of the quantum Internet.
\end{abstract}

\section{Introduction}
\label{sec1}
Quantum memories are a fundamental of any global-scale quantum Internet \cite{puj1,puj3,ref1,ref2,ref3, refn7}. However, while quantum repeaters can be realized without the necessity of quantum memories \cite{puj1,puj3}, these units, in fact, are required for guaranteeing an optimal performance in any high-performance quantum networking scenario \cite{ref5,ref6,ref7,ref11,ref13a,ref13,ref13b,ref8,ref9,ref10,add1,add2,add3,refqirg,ref18,ref19,ref20,ref21,add4,refn7,refn5,refn3,sat,telep,refn1,refn2,refn4,refn6,puj1}. Therefore, the utilization of quantum memories still represents a fundamental problem in the quantum Internet \cite{ref23,ref24,ref25,ref26,ref27,nadd1,nadd2,nadd3,nadd4,nadd5}, since the near-term quantum devices (such as quantum repeaters \cite{ref1,ref3,ref6,refn6,ref50,ref51,ref53,ref56,ref58}) and gate-model quantum computers \cite{qc1,qc2,qc3,qc4,qc5,qc6,qcadd1,qcadd2,qnuj1,qnuj2,qnuj3,qnuj4} have to store the quantum states in their local quantum memories \cite{ref29,ref34,ref35,ref36,ref37,ref38,ref39,ref40,ref41,ref42,ref43,ref44,ref45,ref46,ref47,ref48,ref49,ref50,ref51,ref52,ref53,ref54,ref55,ref56,ref57,ref58,ref59,ref60,ref61,ref62
}. The main problem here is the efficient readout of the stored quantum systems and the low retrieval efficiency of these systems from the quantum registers of the quantum memory. Currently, no general solution to this problem is available, since the quantum register evolves the stored quantum systems via an unknown operation, and the input quantum system is also unknown, in a general scenario \cite{refn7,ref1,ref5,ref6,ref7,ref13,ref13a}. The optimization of the readout procedure is therefore a hard and complex problem. Several physical implementations have been developed in the last few years \cite{qm1,qm2,qm3,qm4,qm5,qmb1,qmb2,qmb3,qmb4,qmb5,qmb6,qmb7,qmb8,qmb9,qmb10,qmb11,qmb12,qmb13,qmb14,qmb15,qmb16}. However, these experimental realizations have several drawbacks, in general because the output signal-to-noise ratio (SNR) values are still not satisfactory for the construction of a powerful, global-scale quantum communication network. As another important application field in quantum communication, the methods of quantum secure direct communication \cite{radd1,radd2,radd3,radd4} also require quantum memory.

Here, we define a novel quantum memory called high-retrieval-efficiency (HRE) quantum memory for near-term quantum devices. An HRE quantum memory unit integrates local unitary operations on its hardware level for the optimization of the readout procedure. An HRE quantum memory unit utilizes the advanced techniques of quantum machine learning \cite{ref31,ref32,ref33} to achieve a significant improvement in the retrieval efficiency. We define the integrated unitary operations of an HRE quantum memory, prove the learning procedure, and evaluate the achievable output SNR values. The local unitaries of an HRE quantum memory achieve the optimization of the readout procedure in an unsupervised manner without the use of any labeled data or any training sequences. The readout procedure of an HRE quantum memory is realized in a completely blind manner. It requires no information about the input quantum system or about the quantum operation of the quantum register. (It is motivated by the fact that this information is not accessible in any practical setting.)

The proposed model assumes that the main challenge is the recovery the stored quantum systems from the quantum register of the quantum memory unit, such that both the input quantum system and the transformation of the quantum memory are unknown. The optimization problem of the readout process also integrates the efficiency of the write-in procedure. In the proposed model, the noise and uncertainty added by the write-in procedure are included in the unknown transformation of the $QR$ quantum register of the quantum memory that results in a $\sigma _{QR}$ mixed quantum system in $QR$.

The novel contributions of our manuscript are as follows:
\begin{enumerate}
\item We define a novel quantum memory called high-retrieval-efficiency (HRE) quantum memory. 
\item An HRE quantum memory unit integrates local unitary operations on its hardware level for the optimization of the readout procedure and utilizes the advanced techniques of quantum machine learning.
\item We define the integrated unitary operations of an HRE quantum memory, prove the learning procedure, and evaluate the achievable output signal-to-noise ratio values. We prove that local unitaries of an HRE quantum memory achieve the optimization of the readout procedure in an unsupervised manner without the use of any labeled data or training sequences.
\item We evaluate the retrieval efficiency of an HRE quantum memory and the output SNR.
\item The proposed results are convenient for gate-model quantum computers and near-term quantum devices.
\end{enumerate}

This paper is organized as follows. \sref{sec2} defines the system model and the problem statement. \sref{sec3} evaluates the integrated local unitary operations of an HRE quantum memory. \sref{sec4} proposes the retrieval efficiency in terms of the achievable output SNR values. Finally, \sref{sec5} concludes the results. Supplemental material is included in the Appendix.

\section{System Model and Problem Statement}
\label{sec2}
\subsection{System Model }
Let $\rho _{in} $ be an unknown input quantum system formulated by $n$ unknown density matrices, 
\begin{equation} \label{ZEqnNum367280} 
\rho _{in} =\sum _{i=1}^{n}\lambda _{i}^{\left(in\right)} {\left| \psi _{i}  \right\rangle} {\left\langle \psi _{i}  \right|}  ,  
\end{equation} 
where $\lambda _{i}^{\left(in\right)} \ge 0$, and $\sum _{i=1}^{n}\lambda _{i}^{\left(in\right)}  =1$. 

The input system is received and stored in the $QR$ quantum register of the HRE quantum memory unit. The quantum systems are $d$-dimensional systems ($d=2$ for a qubit system). For simplicity, we focus on $d=2$ dimensional quantum systems throughout the derivations.

The $U_{QR} $ unknown evolution operator of the $QR$ quantum register defines a mixed state $\sigma _{QR} $ as 
\begin{equation} \label{ZEqnNum669290} 
\begin{split}
   {{\sigma }_{QR}}&={{U}_{QR}}{{\rho }_{in}}U_{QR}^{\dagger } \\ 
 & =\sum\limits_{i=1}^{n}{{{\lambda }_{i}}\left| {{\varphi }_{i}} \right\rangle \left\langle  {{\varphi }_{i}} \right|,}  
\end{split}
\end{equation} 
where $\lambda _{i} \ge 0$, $\sum _{i=1}^{n}\lambda _{i}  =1$. 

Let us allow to rewrite \eqref{ZEqnNum669290} for a particular time $t$, $t=1,\ldots ,T$, where $T$ is a total evolution time, via a mixed system $\sigma _{QR}^{\left(t\right)} $, as
\begin{equation} \label{ZEqnNum694160} 
\begin{split}
   \sigma _{QR}^{\left( t \right)}&=U_{QG}^{\left( t \right)}{{\rho }_{in}}{{\left( U_{QG}^{\left( t \right)} \right)}^{\dagger }} \\ 
 & =\sum\limits_{i=1}^{n}{\lambda _{i}^{\left( t \right)}\left| \varphi _{i}^{\left( t \right)} \right\rangle \left\langle  \varphi _{i}^{\left( t \right)} \right|} \\ 
 & =\sum\limits_{i=1}^{n}{\left( \sqrt{\lambda _{i}^{\left( t \right)}}\left| \varphi _{i}^{\left( t \right)} \right\rangle  \right)\left( \sqrt{\lambda _{i}^{\left( t \right)}}\left\langle  \varphi _{i}^{\left( t \right)} \right| \right)} \\ 
 & =\sum\limits_{i=1}^{n}{X_{i}^{\left( t \right)}{{\left( X_{i}^{\left( t \right)} \right)}^{\dagger }}} \\ 
 & ={{X}^{\left( t \right)}}{{\left( {{X}^{\left( t \right)}} \right)}^{\dagger }},  
\end{split}
\end{equation} 
where $U_{QR}^{\left(t\right)} $ is an unknown evolution matrix of the $QR$ quantum register at a given $t$, with a dimension
\begin{equation} \label{4)} 
\dim \left(U_{QR}^{\left(t\right)} \right)=d^{n} \times d^{n} ,   
\end{equation} 
with $0\le \lambda _{i}^{\left(t\right)} \le 1$, $\sum _{i}\lambda _{i}^{\left(t\right)}  =1$, while $X_{i}^{\left(t\right)} \in {\mathbb{C}}$ is an unknown complex quantity, defined as
\begin{equation} \label{ZEqnNum880602} 
X_{i}^{\left(t\right)} =\sqrt{\lambda _{i}^{\left(t\right)} } {\left| \varphi _{i}^{\left(t\right)}  \right\rangle}  
\end{equation} 
and
\begin{equation} \label{6)} 
X^{\left(t\right)} =\sum _{i=1}^{n}X_{i}^{\left(t\right)}  .  
\end{equation} 
Then, let us rewrite $\sigma _{QR}^{\left(t\right)} $ from \eqref{ZEqnNum694160} as
\begin{equation} \label{ZEqnNum275626} 
\sigma _{QR}^{\left(t\right)} =\rho _{in} +\zeta _{QR}^{\left(t\right)} , 
\end{equation} 
where $\rho _{in} $ is as in \eqref{ZEqnNum367280}, and $\zeta _{QR}^{\left(t\right)} $ is an unknown residual density matrix at a given $t$. 

Therefore, \eqref{ZEqnNum275626} can be expressed as a sum of $M$ source quantum systems,
\begin{equation} \label{ZEqnNum522860} 
\sigma _{QR}^{\left(t\right)} =\sum _{m=1}^{M}\rho _{m}  ,  
\end{equation} 
where $\rho _{m} $ is the $m$-th source quantum system and $m=1,\ldots ,M$, where  
\begin{equation} \label{ZEqnNum656436} 
M=2,   
\end{equation} 
in our setting, since  
\begin{equation} \label{10)} 
\rho _{1} =\rho _{in}  
\end{equation} 
and
\begin{equation} \label{11)} 
\rho _{2} =\zeta _{QR}^{\left(t\right)} . 
\end{equation} 
In terms of the $M$ subsystems, \eqref{ZEqnNum694160} can be rewritten as
\begin{equation} \label{ZEqnNum630316} 
\begin{split}
  \sigma _{QR}^{\left( t \right)}&=\sum\limits_{m=1}^{M}{\sum\limits_{i=1}^{n}{\lambda _{i}^{\left( m,t \right)}\left| \varphi _{i}^{\left( m,t \right)} \right\rangle \left\langle  \varphi _{i}^{\left( m,t \right)} \right|}} \\ 
 & =\sum\limits_{m=1}^{M}{\sum\limits_{i=1}^{n}{\sqrt{\lambda _{i}^{\left( m,t \right)}}\left| \varphi _{i}^{\left( m,t \right)} \right\rangle \sqrt{\lambda _{i}^{\left( m,t \right)}}\left\langle  \varphi _{i}^{\left( m,t \right)} \right|}} \\ 
 & =\sum\limits_{m=1}^{M}{\sum\limits_{i=1}^{n}{X_{i}^{\left( m,t \right)}{{\left( X_{i}^{\left( m,t \right)} \right)}^{\dagger }}}} \\ 
 & =\sum\limits_{m=1}^{M}{{{X}^{\left( m,t \right)}}{{\left( {{X}^{\left( m,t \right)}} \right)}^{\dagger }}},  
\end{split}
\end{equation} 
where $X_{i}^{\left(m,t\right)} $ is a complex quantity associated with an $m$-th source system, 
\begin{equation} \label{ZEqnNum718067} 
X_{i}^{\left(m,t\right)} =\sqrt{\lambda _{i}^{\left(m,t\right)} } {\left| \varphi _{i}^{\left(m,t\right)}  \right\rangle} , 
\end{equation} 
with $0\le \lambda _{i}^{\left(m,t\right)} \le 1$, $\sum _{m}\sum _{i}\lambda _{i}^{\left(m,t\right)}   =1$, and
\begin{equation} \label{ZEqnNum440486} 
X^{\left(m,t\right)} =\sum _{i=1}^{n}X_{i}^{\left(m,t\right)}  .  
\end{equation} 
The aim is to find the $V_{QG} $ inverse matrix of the unknown evolution matrix $U_{QR} $ in \eqref{ZEqnNum669290}, as
\begin{equation} \label{15)} 
V_{QG} =U_{QG}^{-1} ,  
\end{equation} 
that yields the separated readout quantum system of the HRE quantum memory unit for $t=1,\ldots ,T$, such that for a given $t$,
\begin{equation} \label{16)} 
\sigma _{out}^{\left(t\right)} =V_{QG}^{\left(t\right)} \sigma _{QR}^{\left(t\right)} \left(V_{QG}^{\left(t\right)} \right)^{\dag } ,  
\end{equation} 
where 
\begin{equation} \label{17)} 
V_{QG}^{\left(t\right)} =\left(U_{QG}^{\left(t\right)} \right)^{-1} .  
\end{equation} 
For a total evolution time $T$, the target $\sigma _{out} $ density matrix is yielded at the output of the HRE quantum memory unit, as
\begin{equation} \label{18)} 
\sigma _{out} \approx \sum _{i=1}^{n}\lambda _{i}^{\left(in\right)} {\left| \psi _{i}  \right\rangle} {\left\langle \psi _{i}  \right|}   
\end{equation} 
with a sufficiently high SNR value, 
\begin{equation} \label{19)} 
{\rm SNR}\left(\sigma _{out} \right)\ge x,   
\end{equation} 
where $x$ is an SNR value that depends on the actual physical layer attributes of the experimental implementation.

The problem is therefore that both the input quantum system \eqref{ZEqnNum367280} and the transformation matrix $U_{QR} $ in \eqref{ZEqnNum669290} of the quantum register are unknown. As we prove, by integrating local unitaries to the HRE quantum memory unit, the unknown evolution matrix of the quantum register can be inverted, which allows us to retrieve the quantum systems of the quantum register. The retrieval efficiency will be also defined in a rigorous manner.

\subsection{Problem Statement}
The problem statement is as follows.

Let $M$ be the number of source systems in the $QR$ quantum register such that the sum of the $M$ source systems identifies the mixed state of the quantum register. Let $m$ be the index of the source system, $m=1,\ldots ,M$, such that $m=1$ identifies the unknown input quantum system stored in the quantum register (target source system), while $m=2,\ldots ,M$ are some unknown residual quantum systems. The input quantum system, the residual systems, and the transformation operation of the quantum register are unknown. The aim is then to define local unitary operations to be integrated on the HRE quantum memory unit for an HRE readout procedure in an unsupervised manner with unlabeled data.

The problems to be solved are summarized in Problems 1--4.
\begin{problem}
Find an unsupervised quantum machine learning method, $U_{ML} $, for the factorization of the unknown mixed quantum system of the quantum register via a blind separation of the unlabeled quantum register. Decompose the unknown mixed system state into a basis unitary and a residual quantum system.
\end{problem}
\begin{problem}
Define a unitary operation for partitioning the bases with respect to the source systems of the quantum register.
\end{problem}
\begin{problem}
Define a unitary operation for the recovery of the target source system.
\end{problem}
\begin{problem}
Evaluate the retrieval efficiency of the HRE quantum memory in terms of the achievable SNR.
\end{problem}

The resolutions of the problems are proposed in Theorems 1--4.

The schematic model of an HRE quantum memory unit is depicted in \fref{fig1}. 

\begin{center}
\begin{figure*}[!htbp]
\begin{center}
\includegraphics[angle = 0,width=1\linewidth]{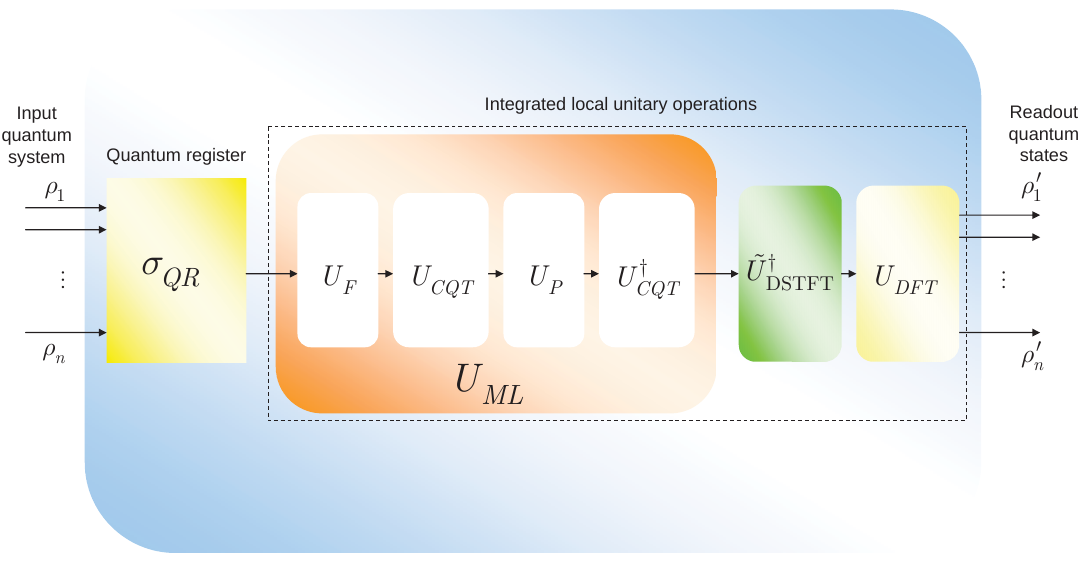}
\caption{The schematic model of a high-retrieval-efficiency (HRE) quantum memory unit. The HRE quantum memory unit contains a $QR$ quantum register and integrated local unitary operations. The $n$ input quantum systems, $\rho _{1} \ldots \rho _{n} $, are received and stored in the quantum register. The state of the $QR$ quantum register defines a mixed state, $\sigma _{QR} =\sum _{i}\lambda _{i} \rho _{i}  $, where $\sum _{i}\lambda _{i}  =1$. The stored density matrices of the $QR$ quantum register are first transformed by a $U_{ML} $, a quantum machine learning unitary (depicted by the orange-shaded box) that implements an unsupervised learning for a blind separation of the unlabeled input, and decomposable as $U_{ML} =U_{F} U_{CQT} U_{P} U_{CQT}^{\dag } $, where $U_{F} $ is a factorization unitary, $U_{CQT} $ is the quantum constant $Q$ transform with a windowing function $f_{W} $ for the localization of the wave functions of the quantum register, $U_{P} $ is a basis partitioning unitary, while $U_{CQT}^{\dag } $ is the inverse of $U_{CQT} $. The result of $U_{ML} $ is processed further by the $\tilde{U}_{{\rm DSTFT}}^{\dag } $ unitary (depicted by the green-shaded box) that realizes the inverse quantum discrete short-time Fourier transform (DSTFT) operation (depicted by the yellow-shaded box), and by the $U_{DFT} $ (quantum discrete Fourier transform) unitary to yield the desired output $\rho '_{1} \ldots \rho '_{n} $.} 
 \label{fig1}
 \end{center}
\end{figure*}
\end{center}

The procedures realized by the integrated unitary operations of the HRE quantum memory are depicted in \fref{fig4}.

 \begin{center}
\begin{figure*}[!htbp]
\begin{center}
\includegraphics[angle = 0,width=1\linewidth]{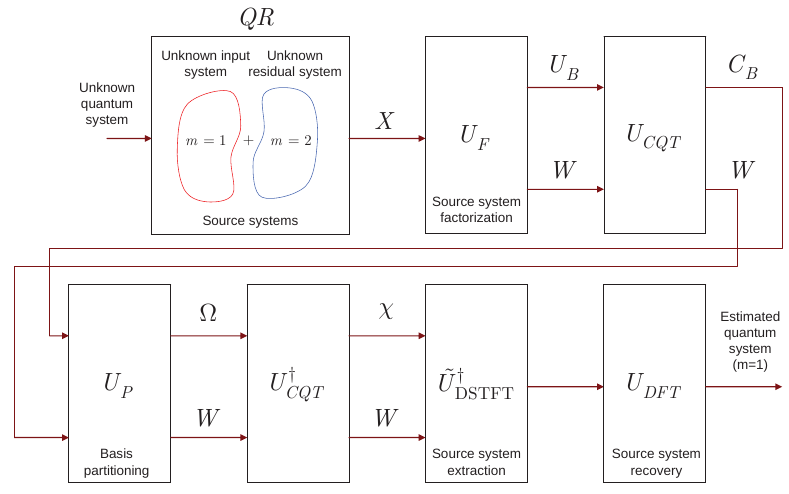}
\caption{Detailed procedures of an HRE quantum memory. The unknown input quantum system is stored in the $QR$ quantum register that realizes an unknown transformation. The density matrix of the quantum register is the sum of $M=2$ source systems, where source system $m=1$ identifies the valuable unknown input quantum system stored in the quantum register, while $m=2$ identifies an unknown undesired residual quantum system. The $U_{F} $ unitary evaluates $K$ bases for the source system and defines a $W$ auxiliary quantum system. The $U_{CQT} $ unitary is a preliminary operation for the partitioning of the $K$ bases onto $M$ clusters via unitary $U_{P} $. The $U_{P} $ unitary regroups the bases with respect to the $M=2$ source systems. The results are then processed by the $\tilde{U}_{{\rm DSTFT}}^{\dag } $ and $U_{DFT} $ unitaries to extract the source system $m=1$ on the output of the memory unit.} 
 \label{fig4}
 \end{center}
\end{figure*}
\end{center}

\subsection{Experimental Implementation}
An experimental implementation of an HRE quantum memory in a near-term quantum device \cite{qc1} can integrate standard photonics devices, optical cavities and other fundamental physical devices. The quantum operations can be realized via the framework of gate-model quantum computations of near-term quantum devices \cite{qc1,qc2,qc3,qc4,qc5,qc6}, such as superconducting units \cite{qc2}. The application of a HRE quantum memory in a quantum Internet setting \cite{puj3,ref2,refn7,ref1,ref3} can be implemented via noisy quantum links between the quantum repeaters \cite{ref6,refn6,ref50,ref51,ref53,ref56,ref58} (e.g., optical fibers \cite{ref28,ref47,ref5}, wireless quantum channels \cite{sat,telep}, free-space optical channels \cite{ref30}) and fundamental quantum transmission protocols \cite{ref4,add4,qkdrev,ref22}.

\section{Integrated Local Unitaries}
\label{sec3}
This section defines the local unitary operations integrated on an HRE quantum memory unit.

\subsection{Quantum Machine Learning Unitary}
The $U_{ML} $ quantum machine learning unitary implements an unsupervised learning for a blind separation of the unlabeled quantum register. The $U_{ML} $ unitary is defined as
\begin{equation} \label{20)} 
U_{ML} =U_{F} U_{CQT} U_{P} U_{CQT}^{\dag } ,   
\end{equation} 
where $U_{F} $ is a factorization unitary, $U_{CQT} $ is the quantum constant $Q$ transform, $U_{P} $ is a partitioning unitary, while $U_{CQT}^{\dag } $ is the inverse of $U_{CQT} $.

\subsubsection{Factorization Unitary}
\begin{theorem}
(Factorization of the unknown mixed quantum system of the quantum register). The $U_{F} $ unitary factorizes the unknown $\sigma _{QR} $ mixed quantum system of the $QR$ quantum register into a unitary $u_{mk} =e^{{-iH_{mk} \tau \mathord{\left/ {\vphantom {-iH_{mk} \tau  \hbar }} \right. \kern-\nulldelimiterspace} \hbar } } $, with a Hamiltonian $H_{mk} $ and application time $\tau $, and into a system $w_{kt} $, where $t=1,\ldots ,T$, $m=1,\ldots ,M$, and $k=1,\ldots ,K$, and where $T$ is the evolution time, $M$ is the number of source systems of $\sigma _{QR} $, and $K$ is the number of bases. 
\end{theorem}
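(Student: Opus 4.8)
The plan is to realize $U_{F}$ as the solution of a blind matrix-factorization problem applied to the time-indexed family $\{\sigma_{QR}^{(t)}\}_{t=1}^{T}$, and then to recast the resulting basis factor in Hamiltonian-exponential form. I would begin from the representation already derived in \eqref{ZEqnNum630316}, in which each snapshot of the register decomposes as $\sigma_{QR}^{(t)}=\sum_{m=1}^{M}X^{(m,t)}(X^{(m,t)})^{\dagger}$ with $X^{(m,t)}=\sum_{i}\sqrt{\lambda_{i}^{(m,t)}}\,|\varphi_{i}^{(m,t)}\rangle$ as in \eqref{ZEqnNum440486}. Collecting the complex quantities $X^{(t)}$ across all $t$ into a single array indexed by $(i,t)$ exhibits the register as a matrix whose columns are the unknown superposed source contributions; this is the object on which $U_{F}$ operates blindly, i.e.\ without access to $\rho_{in}$ or to $U_{QR}$.

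Next I would impose the factorized ansatz $X^{(t)}\approx\sum_{m=1}^{M}\sum_{k=1}^{K}u_{mk}\,w_{kt}$, a product of a basis factor $u_{mk}$ that carries the structural content of each source $m$ and an activation factor $w_{kt}$ that carries the time-dependent weight of the $k$-th basis, and I would set $\hat{\sigma}^{(t)}=\sum_{m,k}(u_{mk}w_{kt})(u_{mk}w_{kt})^{\dagger}$ for the corresponding reconstruction. The factorization is then defined implicitly as the minimizer of a nonnegative divergence $D(\sigma_{QR}^{(t)}\,\|\,\hat{\sigma}^{(t)})$ that vanishes exactly when $\hat{\sigma}^{(t)}=\sigma_{QR}^{(t)}$. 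The key step is to show that this cost admits a fixed point reachable by multiplicative-update iterations, the quantum analogue of the Lee--Seung nonnegative-matrix-factorization rules: one alternately updates $w_{kt}$ with the bases held fixed and then the bases with $w_{kt}$ held fixed, each update being a ratio of the corresponding gradient terms so that nonnegativity of $w_{kt}$ is preserved and $D$ is monotonically nonincreasing. Monotonicity together with the lower bound $D\ge 0$ then yields convergence to a stationary factorization, which I take to be the output of $U_{F}$.

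To obtain the claimed exponential form $u_{mk}=e^{-iH_{mk}\tau/\hbar}$, I would invoke the standard fact that every element of the unitary group is the exponential of $-i/\hbar$ times a Hermitian generator. Given the converged basis factor, one sets $H_{mk}=\tfrac{i\hbar}{\tau}\log u_{mk}$ on the principal branch; this $H_{mk}$ is Hermitian precisely when $u_{mk}$ is unitary, so that each basis is realizable as evolution under the time-independent Hamiltonian $H_{mk}$ for the fixed application time $\tau$. The content of the register not captured by the bases is absorbed into the auxiliary system $w_{kt}$, completing the factorization into $\{u_{mk}\}$ and $\{w_{kt}\}$ over the stated ranges $t=1,\dots,T$, $m=1,\dots,M$, $k=1,\dots,K$.

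I expect the main obstacle to be the monotonicity and convergence step under the unitarity constraint. Exhibiting a valid auxiliary (majorizing) function for the chosen divergence so that the multiplicative updates provably decrease $D$ is already delicate, and the genuinely quantum complication absent from classical factorization is that the basis factor must remain on the unitary manifold rather than merely nonnegative. Ensuring that each updated $u_{mk}$ is still unitary requires either a retraction onto the unitary group after every step or a reparametrization directly through the Hermitian generators $H_{mk}$, and proving that this constrained iteration retains the descent property $D(\text{new})\le D(\text{old})$ is the crux of the argument.
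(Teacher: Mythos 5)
You correctly identify the crux of your own argument and then leave it unproved, so the proposal has a genuine gap exactly where you flag it. Your convergence claim rests on Lee--Seung-style multiplicative updates monotonically decreasing the divergence while the basis factor is kept on the unitary manifold, but no auxiliary (majorizing) function is exhibited, and there are two concrete reasons the step would fail as stated. First, the classical Lee--Seung majorization is built for the KL divergence of entrywise-nonnegative matrices: it applies Jensen's inequality to the ratios $u_{mk}w_{kt}/\sum_{j}u_{mj}w_{jt}$ entry by entry. The objective actually at stake, \eqref{ZEqnNum162412}, is the quantum relative entropy $D\left(\left.\rho_{\vec{X}}\right\|\rho_{\tilde{X}}\right)=\mathrm{Tr}\left(\rho_{\vec{X}}\log\rho_{\vec{X}}\right)-\mathrm{Tr}\left(\rho_{\vec{X}}\log\rho_{\tilde{X}}\right)$, whose second term contains the logarithm of a sum of generally non-commuting rank-one operators; the separability trick that produces the classical auxiliary function does not transfer, and you give no substitute. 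Second, your fallback of a retraction onto the unitary group after each multiplicative step is structurally incompatible with the updates themselves: ratio-type multiplicative updates preserve the positive orthant, while unitary matrices necessarily carry entries of varying phase, so the iterate and its retraction live in essentially disjoint constraint sets and no projection argument can salvage the descent inequality $D(\mathrm{new})\le D(\mathrm{old})$. Your other suggested repair --- reparametrizing through Hermitian generators via $H_{mk}=\frac{i\hbar}{\tau}\log u_{mk}$ --- is the right instinct but is only stated, not carried out; note also that the theorem's $H_{mk}$ is not a free Hermitian matrix recovered a posteriori but is fixed a priori as the rank-one form $H_{mk}=G_{mk}{\left|k_{m}\right\rangle}{\left\langle k_{m}\right|}$.

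It is worth seeing how the paper avoids your obstacle entirely: it never runs deterministic descent updates. The exponential form $u_{mk}=e^{-iH_{mk}\tau/\hbar}$ is declared at the outset, and what is estimated are scalar control parameters, via a variational Bayesian Poisson--Exponential factorization: exponential priors \eqref{ZEqnNum127140} and \eqref{ZEqnNum324268} on $u_{mk}$ and $w_{kt}$, latent allocations $\kappa_{mkt}=u_{mk}w_{kt}$ as in \eqref{ZEqnNum831401}, multinomial and Gamma variational distributions \eqref{ZEqnNum489855}, \eqref{ZEqnNum204512} and \eqref{ZEqnNum334896}, maximization of the evidence lower bound \eqref{ZEqnNum672915} in the explicit form \eqref{ZEqnNum960383}, and closed-form stationary solutions $E_{mk}$ and $F_{kt}$ in \eqref{ZEqnNum980009} and \eqref{ZEqnNum559004} obtained from quadratic stationarity conditions. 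Convergence is thus inherited from the variational-inference machinery rather than from a bespoke majorization, and the unitarity constraint never enters the optimization loop. Your proposal also omits a component that the paper treats as part of the theorem's content: the selection of the number of bases, resolved there by $K^{*}=\arg\mathop{\max}\limits_{K}{\rm {\mathcal L}}_{{\rm {\mathcal D}}_{v}}\left(K\right)$. To complete your route you would need either a genuinely quantum auxiliary function for \eqref{ZEqnNum162412} with the optimization performed directly over the Hermitian generators (where unitarity is automatic), or to abandon descent-type updates in favor of the probabilistic estimation the paper employs.
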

\begin{proof}
The aim of the $U_{F} $ factorization unitary is to factorize the mixed quantum register \eqref{ZEqnNum669290} into a basis matrix $U_{B} $ and a quantum system $\vec{\rho }_{W} $, as
\begin{equation} \label{ZEqnNum506051} 
\begin{split}
   {{U}_{F}}{{{\vec{\sigma }}}_{QR}}U_{F}^{\dagger }&={{U}_{F}}\left( {{U}_{QR}}{{\rho }_{in}}U_{QR}^{\dagger } \right)U_{F}^{\dagger } \\ 
 & ={{U}_{B}}{{{\vec{\rho }}}_{W}}U_{B}^{\dagger },  
\end{split}
\end{equation} 
where $U_{B} $ is a complex basis matrix, defined as
\begin{equation} \label{ZEqnNum997483} 
U_{B} =\left\{u_{mk} \right\}\in {\mathbb{C}}^{M\times K} ,  
\end{equation} 
and $\vec{\rho }_{W} \in {\mathbb{C}}^{K\times T} $ is a complex matrix, defined as
\begin{equation} \label{23)} 
\vec{\rho }_{W} =\left\{\rho _{W}^{\left(t\right)} \right\}_{t=1}^{T} ,  
\end{equation} 
where
\begin{equation} \label{24)} 
\begin{split}
   \rho _{W}^{\left( t \right)}&=\sum\limits_{k=1}^{K}{v_{k}^{\left( t \right)}\left| {{\phi }_{k}} \right\rangle \left\langle  {{\phi }_{k}} \right|} \\ 
 & =\sum\limits_{k=1}^{K}{\sqrt{v_{k}^{\left( t \right)}}\left| {{\phi }_{k}} \right\rangle \sqrt{v_{k}^{\left( t \right)}}\left\langle  {{\phi }_{k}} \right|} \\ 
 & =\sum\limits_{k=1}^{K}{W_{k}^{\left( t \right)}{{\left( W_{k}^{\left( t \right)} \right)}^{\dagger }}},  
\end{split}
\end{equation} 
where $0\le v_{k}^{\left(t\right)} \le 1$, and $\sum _{k=1}^{K}v_{k}^{\left(t\right)}  =1$, while $K$ is the total number of bases of $U_{B} $, while $W_{k}^{\left(t\right)} \in {\mathbb{C}}$ is a complex quantity, as
\begin{equation} \label{ZEqnNum286942} 
W_{k}^{\left(t\right)} =\sqrt{v_{k}^{\left(t\right)} } {\left| \phi _{k}  \right\rangle} .  
\end{equation} 
The first part of the problem is therefore to find \eqref{ZEqnNum997483}, where $u_{mk} $ is a unitary that sets a computational basis for $W_{k}^{\left(t\right)} $ in \eqref{ZEqnNum286942}, defined as
\begin{equation} \label{26)} 
u_{mk} =e^{{-iH_{mk} \tau \mathord{\left/ {\vphantom {-iH_{mk} \tau  \hbar }} \right. \kern-\nulldelimiterspace} \hbar } } ,  
\end{equation} 
where $H_{mk} $ is a Hamiltonian, as
\begin{equation} \label{27)} 
H_{mk} =G_{mk} {\left| k_{m}  \right\rangle} {\left\langle k_{m}  \right|} ,  
\end{equation} 
where $G_{mk} $ is the eigenvalue of basis ${\left| k_{m}  \right\rangle} $, $H_{mk} {\left| k_{m}  \right\rangle} =G_{mk} {\left| k_{m}  \right\rangle} $, while $\tau $ is the application time of $u_{mk} $.

The second part of the problem is to determine $W$, as
\begin{equation} \label{ZEqnNum916898} 
W=\left\{W_{k}^{\left(t\right)} =w_{kt} \right\}\in {\mathbb{C}}^{K\times T} ,  
\end{equation} 
where $W_{k}^{\left(t\right)} =w_{kt} $ is a system state, that formulates $\tilde{X}^{\left(m,t\right)} $ as
\begin{equation} \label{ZEqnNum112860} 
\begin{split}
   {{{\tilde{X}}}^{\left( m,t \right)}}&={{\left[ {{U}_{B}}W \right]}_{mt}} \\ 
 & =\sum\limits_{k=1}^{K}{{{u}_{mk}}{{w}_{kt}}},  
\end{split}
\end{equation} 
where $\tilde{X}^{\left(m,t\right)} $ is an approximation of $X^{\left(m,t\right)} $, 
\begin{equation} \label{ZEqnNum918527} 
\tilde{X}^{\left(m,t\right)} \approx X^{\left(m,t\right)} ,  
\end{equation} 
where $X^{\left(m,t\right)} $ is defined in \eqref{ZEqnNum440486}.

As follows, for the total evolution time $T$, $\vec{X}\in {\mathbb{C}}^{M\times T} $ can be defined as
\begin{equation} \label{ZEqnNum956184} 
\vec{X}=\left\{X^{\left(1,t\right)} ,\ldots ,X^{\left(M,t\right)} \right\}_{t=1}^{T} ,   
\end{equation} 
and the challenge is to evaluate \eqref{ZEqnNum956184} as a decomposition
\begin{equation} \label{ZEqnNum852487} 
\begin{split}
   \tilde{X}&={{U}_{B}}W \\ 
 & ={{e}^{{-i{{H}_{\Sigma }}\tau }/{\hbar }\;}}W \\ 
 & =\sum\limits_{m=1}^{M}{\sum\limits_{t=1}^{T}{\sum\limits_{k=1}^{K}{{{u}_{mk}}W_{k}^{\left( t \right)}}}} \\ 
 & =\sum\limits_{m=1}^{M}{\sum\limits_{t=1}^{T}{\sum\limits_{k=1}^{K}{{{e}^{{-i{{H}_{mk}}\tau }/{\hbar }\;}}}\sqrt{v_{k}^{\left( t \right)}}\left| \phi _{k}^{\left( m,t \right)} \right\rangle .}}  
\end{split}
\end{equation} 
Thus, by applying of the $u_{mk} $ unitaries for the total evolution time $T$, $\tilde{X}\in {\mathbb{C}}^{M\times T} $ is as
\begin{equation} \label{ZEqnNum106419} 
\begin{split}
   \tilde{X}&={{U}_{B}}W \\ 
 & =\sum\limits_{m=1}^{M}{\sum\limits_{k=1}^{K}{{{\ell }_{m,k}}\left( \tau  \right)}\left| {{k}_{m}} \right\rangle ,} \\ 
 & =\alpha \left( \sum\limits_{{{k}_{1}}=1}^{{{K}_{1}}}{\left| {{k}_{1}} \right\rangle +\ldots +\sum\limits_{{{k}_{M}}=1}^{{{K}_{M}}}{\left| {{k}_{M}} \right\rangle }} \right),  
\end{split}
\end{equation} 
where $K_{m} $ is the number of bases associated with the $m$-th source system,
\begin{equation} \label{34)} 
\sum _{m=1}^{K}K_{m}  =K,  
\end{equation} 
and $0\le \left|\ell _{m,k} \left(\tau \right)\right|^{2} \le 1$, $\sum _{m=1}^{M}\sum _{k=1}^{K}\left|\ell _{m,k} \left(\tau \right)\right|^{2}   =1$.

In our setting $M=2$, and our aim is to get the system state $m=1$ on the output of the HRE quantum memory, thus a ${\left| \Phi ^{*}  \right\rangle} $ target output system state is defined as
\begin{equation} \label{ZEqnNum423447} 
{\left| \Phi ^{*}  \right\rangle} ={\textstyle\frac{1}{\sqrt{K_{1} } }} \sum _{k_{1} =1}^{K_{1} }{\left| k_{1}  \right\rangle}  ,  
\end{equation} 
where $K_{1} $ is the number of bases for source system $m=1$, $k_{1} =1,\ldots ,K_{1} $.

Let rewrite the system state $\tilde{X}$ \eqref{ZEqnNum852487} as
\begin{equation} \label{ZEqnNum442220} 
\tilde{X}=\left\{\tilde{X}^{\left(1,t\right)} ,\ldots ,\tilde{X}^{\left(M,t\right)} \right\}_{t=1}^{T} ,   
\end{equation} 
and let
\begin{equation} \label{ZEqnNum808422} 
X^{\left(t\right)} =\sum _{m=1}^{M}X^{\left(m,t\right)}  ,   
\end{equation} 
and
\begin{equation} \label{ZEqnNum252043} 
\tilde{X}^{\left(t\right)} =\sum _{m=1}^{M}\tilde{X}^{\left(m,t\right)}  .  
\end{equation} 
Then, let $\rho _{\vec{X}} $ be a density matrix associated with $\vec{X}$, defined as 
\begin{equation} \label{39)} 
\rho _{\vec{X}} =\sum _{m=1}^{M}\sum _{t=1}^{T}\vec{X}^{\left(m,t\right)} \left(\vec{X}^{\left(m,t\right)} \right)^{\dag }    
\end{equation} 
and let
\begin{equation} \label{40)} 
\rho _{\tilde{X}} =\sum _{m=1}^{M}\sum _{t=1}^{T}\tilde{X}^{\left(m,t\right)} \left(\tilde{X}^{\left(m,t\right)} \right)^{\dag }    
\end{equation} 
be the density matrix associated with \eqref{ZEqnNum442220}. 

The aim of the estimation is to minimize the $D\left(\left. \cdot \right\| \cdot \right)$ quantum relative entropy function taken between $\rho _{\vec{X}} $ and $\rho _{\tilde{X}} $, thus an $f\left(U_{F} \right)$ objective function for $U_{F} $ is defined via \eqref{ZEqnNum808422} and \eqref{ZEqnNum252043} as
\begin{equation} \label{ZEqnNum162412} 
\begin{split}
   f\left( {{U}_{F}} \right)&=\underset{{\tilde{X}}}{\mathop{\min }}\,D\left( \left. {{\rho }_{{\vec{X}}}} \right\|{{\rho }_{{\tilde{X}}}} \right) \\ 
 & =\underset{{\tilde{X}}}{\mathop{\min }}\,\text{Tr}\left( {{\rho }_{{\vec{X}}}}\log \left( {{\rho }_{{\vec{X}}}} \right) \right)-\text{Tr}\left( {{\rho }_{{\vec{X}}}}\log \left( {{\rho }_{{\tilde{X}}}} \right) \right).  
\end{split}
\end{equation} 
To achieve the objective function $f\left(U_{F} \right)$ in \eqref{ZEqnNum162412}, a factorization method is defined for $U_{F} $ that is based on the fundamentals of Bayesian nonnegative matrix factorization \cite{nmf1,nmf2,nmf3,bnmf1,bnmf2,bnmf3,ss1,ss2,ss3,ss4} (Footnote: The $U_{F} $ factorization unitary applied on the mixed state of the quantum register is analogous to a Poisson-Exponential Bayesian nonnegative matrix factorization \cite{ss1,ss2,ss3,ss4} process.). The method adopts the Poisson distribution as ${\rm {\mathcal L}}\left(\cdot \right)$ likelihood function and the exponential distribution for the control parameters \cite{ss1,ss2,ss3,ss4} $\alpha _{mk} $ and $\beta _{kt} $ defined for the controlling of $u_{mk} $ and $w_{kt} $. 

Let $u_{mk} $ and $w_{kt} $ from \eqref{ZEqnNum112860} be defined via the control parameters $\alpha _{mk} $ and $\beta _{kt} $ as exponential distributions
\begin{equation} \label{ZEqnNum127140} 
u_{mk} \simeq \alpha _{mk} e^{-\alpha _{mk} u_{mk} } ,  
\end{equation} 
with mean $\alpha _{mk}^{-1} $, and 
\begin{equation} \label{ZEqnNum324268} 
w_{kt} \simeq \beta _{kt} e^{-\beta _{kt} w_{kt} } ,  
\end{equation} 
with mean $\beta _{kt}^{-1} $.

Using \eqref{ZEqnNum162412}, \eqref{ZEqnNum127140} and \eqref{ZEqnNum324268}, a ${\rm {\mathcal L}}\left(\cdot \right)$ log likelihood function
\begin{equation} \label{44)} 
-{\rm {\mathcal L}}\left(x,\left. y\right|z\right)=-\log \Pr \left(x,\left. y\right|z\right) 
\end{equation} 
can be defined as
\begin{equation} \label{ZEqnNum431999}
\begin{split}
   -\mathcal{L}&\left( {{U}_{B}},W\left| {\vec{X}} \right. \right) \\ 
 & =D\left( \left. {{\rho }_{{\vec{X}}}} \right\|{{\rho }_{{\tilde{X}}}} \right)+\sum\limits_{m=1}^{M}{\sum\limits_{k=1}^{K}{{{\alpha }_{mk}}{{u}_{mk}}{{\left( {{\alpha }_{mk}}{{u}_{mk}} \right)}^{\dagger }}}}+\sum\limits_{k=1}^{K}{\sum\limits_{t=1}^{T}{{{\beta }_{kt}}{{w}_{kt}}{{\left( {{\beta }_{kt}}{{w}_{kt}} \right)}^{\dagger }}}},  
\end{split}
\end{equation}
thus the objective function $f\left(U_{F} \right)$ can be rewritten via as \eqref{ZEqnNum431999}
\begin{equation} \label{46)} 
f\left(U_{F} \right)=\mathop{\min }\limits_{\tilde{X}} \left(-{\rm {\mathcal L}}\left(U_{B} ,W\left|\vec{X}\right. \right)\right).  
\end{equation} 
The problem is therefore can be reduced to determine the model parameters 
\begin{equation} \label{ZEqnNum225569} 
\zeta =\left\{U_{B} ,W\right\} 
\end{equation} 
that are treated as latent variables for the estimation of the control parameters \cite{ss1,ss2,ss3,ss4,bnmf1,bnmf2,bnmf3}
\begin{equation} \label{ZEqnNum356288} 
\tau _{mk}^{\left(t\right)} =\left\{\alpha _{mk} ,\beta _{kt} \right\}.  
\end{equation} 
A maximum likelihood estimation $\tilde{\zeta }$ of \eqref{ZEqnNum225569} is as
\begin{equation} \label{ZEqnNum761868} 
\tilde{\zeta }=\arg \mathop{\max }\limits_{\zeta } {\rm {\mathcal D}}\left(\left. \vec{X}\right|\zeta \right),  
\end{equation} 
where ${\rm {\mathcal D}}\left(\cdot \right)$ is some distribution, that identifies an incomplete estimation problem.

The estimation of \eqref{ZEqnNum225569} can also be yielded from a maximization of a marginal likelihood function ${\rm {\mathcal L}}\left(\left. \vec{X}\right|\zeta \right)$ as
\begin{equation} \label{ZEqnNum374829} 
{\rm {\mathcal L}}\left(\left. \vec{X}\right|\zeta \right)={\int\limits\!\!\!\!\int}\sum _{\vec{\kappa }}{\rm {\mathcal D}}\left(\left. \vec{X}\right|\vec{\kappa }\right){\rm {\mathcal D}}\left(\left. \vec{\kappa }\right|U_{B} ,W\right){\rm {\mathcal D}}\left(U_{B} ,\left. W\right|\zeta \right)dU_{B} dW  , 
\end{equation} 
where $\vec{\kappa }$ is a complex matrix, $\vec{\kappa }\in {\mathbb{C}}^{M\times T} $,
\begin{equation} \label{51)} 
\vec{\kappa }=\left\{\kappa ^{\left(1,t\right)} ,\ldots ,\kappa ^{\left(M,t\right)} \right\}_{t=1}^{T} ,  
\end{equation} 
where
\begin{equation} \label{52)} 
\kappa ^{\left(m,t\right)} =\left(\kappa _{k=1}^{\left(m,t\right)} ,\ldots ,\kappa _{k=K}^{\left(m,t\right)} \right)^{T} , 
\end{equation} 
with 
\begin{equation} \label{53)} 
\kappa _{k}^{\left(m,t\right)} =\kappa _{mkt}  
\end{equation} 
where
\begin{equation} \label{ZEqnNum831401} 
\kappa _{mkt} =u_{mk} w_{kt} .  
\end{equation} 
The quantity in \eqref{ZEqnNum831401} can be estimated via \eqref{ZEqnNum127140} and \eqref{ZEqnNum324268} as
\begin{equation} \label{ZEqnNum972723} 
\kappa _{mkt} \approx \alpha _{mk} e^{-\alpha _{mk} u_{mk} } \beta _{kt} e^{-\beta _{kt} w_{kt} } .  
\end{equation} 
Using \eqref{ZEqnNum831401}, $\tilde{X}^{\left(m,t\right)} $ in \eqref{ZEqnNum112860} can be rewritten as
\begin{equation} \label{56)} 
\tilde{X}^{\left(m,t\right)} =\sum _{m=1}^{M}\sum _{k=1}^{K}\kappa _{mkt}   . 
\end{equation} 
However, since the exact solution does not exists \cite{ss1,ss2,ss3,ss4}, since it would require the factorization of ${\rm {\mathcal D}}\left(\left. \vec{\kappa },U_{B} ,W\right|\vec{X},\zeta \right)$, such that $\zeta ,U_{B} ,W$ are unknown. 

This problem can be solved by a variational Bayesian inference procedure \cite{ss1,ss2,ss3,ss4,bnmf1,bnmf2,bnmf3}, via the maximization of the lower bound of a likelihood function ${\rm {\mathcal L}}_{{\rm {\mathcal D}}_{v} } $
\begin{equation} \label{ZEqnNum672915}
\begin{split}
   {{\mathcal{L}}_{{{\mathcal{D}}_{v}}}}&=\iint\limits{\sum\limits_{{\vec{\kappa }}}{{{\mathcal{D}}_{v}}\left( \vec{\kappa },{{U}_{B}},W \right)\log \tfrac{\mathcal{D}\left( \vec{X},\vec{\kappa },{{U}_{B}},\left. W \right|\zeta  \right)}{{{\mathcal{D}}_{v}}\left( \vec{\kappa },{{U}_{B}},W \right)}d{{U}_{B}}dW}} \\ 
 & =\mathbb{E}\left( \log \mathcal{D}\left( \vec{X},\vec{\kappa },{{U}_{B}},\left. W \right|\zeta  \right) \right)+H\left( {{\mathcal{D}}_{v}}\left( \vec{\kappa },{{U}_{B}},W \right) \right),  
\end{split}
\end{equation}
where ${\rm {\mathcal D}}_{v} $ is a variational distribution, while $H\left({\rm {\mathcal D}}_{v} \left(\vec{\kappa },U_{B} ,W\right)\right)$ is the entropy of variational distribution ${\rm {\mathcal D}}_{v} \left(\vec{\kappa },U_{B} ,W\right)$, 
\begin{equation}\label{ZEqnNum532577}
H\left( {{\mathcal{D}}_{v}}\left( \vec{\kappa },{{U}_{B}},W \right) \right)=\sum\limits_{m=1}^{M}{\sum\limits_{t=1}^{T}{H\left( {{\kappa }^{\left( m,t \right)}} \right)}}+\sum\limits_{m=1}^{M}{\sum\limits_{k=1}^{K}{H\left( {{u}_{mk}} \right)}}+\sum\limits_{k=1}^{K}{\sum\limits_{t=1}^{T}{H\left( {{w}_{kt}} \right),}}
\end{equation}
and where ${\rm {\mathcal D}}_{v} \left(\vec{\kappa },U_{B} ,W\right)$ is a joint variational distribution, as
\begin{equation}\label{ZEqnNum150251}
\begin{split}
   {{\mathcal{D}}_{v}}\left( \vec{\kappa },{{U}_{B}},W \right)&={{\mathcal{D}}_{v}}\left( {\vec{\kappa }} \right){{\mathcal{D}}_{v}}\left( {{U}_{B}} \right){{\mathcal{D}}_{v}}\left( W \right) \\ 
 & =\prod\limits_{m}{\prod\limits_{t}{\prod\limits_{k}{{{\mathcal{D}}_{v}}\left( {{\kappa }_{mkt}} \right){{\mathcal{D}}_{v}}\left( {{u}_{mk}} \right){{\mathcal{D}}_{v}}\left( {{w}_{kt}} \right)}}},  
\end{split}
\end{equation}
from which distribution ${\rm {\mathcal D}}\left(\left. \vec{\kappa },U_{B} ,W\right|\vec{X},\zeta \right)$ can be approximated as \cite{ss1,ss2,ss3,ss4}
\begin{equation} \label{60)} 
{\rm {\mathcal D}}\left(\left. \vec{\kappa },U_{B} ,W\right|\vec{X},\zeta \right)\approx \prod _{m}\prod _{t}\prod _{k}{\rm {\mathcal D}}_{v} \left(\kappa _{mkt} \right){\rm {\mathcal D}}_{v} \left(u_{mk} \right){\rm {\mathcal D}}_{v} \left(w_{kt} \right)   .  
\end{equation} 
The function ${\rm {\mathcal L}}_{{\rm {\mathcal D}}_{v} } $ in \eqref{ZEqnNum672915} is related to \eqref{ZEqnNum374829} as
\begin{equation} \label{61)} 
{\rm {\mathcal L}}\left(\left. \vec{X}\right|\zeta \right)\ge {\rm {\mathcal L}}_{{\rm {\mathcal D}}_{v} } .  
\end{equation} 
The result in \eqref{ZEqnNum150251} therefore also determines the number $K$ of bases selected for the factorization unitary $U_{F} $. The ${\rm {\mathcal D}}_{v} $ variational distributions ${\rm {\mathcal D}}_{v} \left(\kappa _{mkt} \right)$, ${\rm {\mathcal D}}_{v} \left(u_{k} \right)$ and ${\rm {\mathcal D}}_{v} \left(w_{kt} \right)$ are determined for the unitary $U_{F} $ as follows.

Let ${\rm {\mathcal D}}_{v} \left(\Phi \right)$ refer to the variational distribution of a given $\Phi $,
\begin{equation} \label{ZEqnNum956463} 
\Phi \in \left\{\vec{\kappa },U_{B} ,W\right\}.   
\end{equation} 
Since only the joint (posterior) distribution ${\rm {\mathcal D}}\left(\left. \vec{X},\vec{\kappa },U_{B} ,W\right|\zeta \right)$ is obtainable, the variational distributions have to be evaluated as
\begin{equation} \label{63)} 
{\mathbb{E}}_{{\rm {\mathcal D}}_{v} \left(i\ne \Phi \right)} \left(\log {\rm {\mathcal D}}\left(\left. \vec{X},\vec{\kappa },U_{B} ,W\right|\zeta \right)\right)=\log {\rm {\mathcal D}}_{v} \left(\Phi \right),   
\end{equation} 
where ${\mathbb{E}}_{{\rm {\mathcal D}}_{v} \left(i\ne \Phi \right)} \left(\cdot \right)$ is the expectation function of the ${\rm {\mathcal D}}_{v} \left(i\right)$ variational distribution of $i$, such that $i\ne \Phi $, where $\Phi $ is as in \eqref{ZEqnNum956463}, with 
\begin{equation} \label{64)} 
{\mathbb{E}}_{a} \left(f\left(a\right)+g\left(a\right)\right)={\mathbb{E}}_{a} \left(f\left(a\right)\right)+{\mathbb{E}}_{a} \left(g\left(a\right)\right),  
\end{equation} 
for some functions $f\left(a\right)$ and $g\left(a\right)$, and
\begin{equation} \label{65)} 
{\mathbb{E}}_{a} \left(bf\left(a\right)\right)=b{\mathbb{E}}_{a} \left(f\left(a\right)\right) 
\end{equation} 
for some constant $b$, (note: for simplicity, we use ${\mathbb{E}}\left(\cdot \right)$ for the expectation function), while  
\begin{equation} \label{ZEqnNum213495}
\begin{split}
  & \log \mathcal{D}\left( \left. \vec{X},\vec{\kappa },{{U}_{B}},W \right|\zeta  \right) \\ 
  =&\sum\limits_{m=1}^{M}{\sum\limits_{t=1}^{T}{\log {{f}_{\delta }}\left( {{X}^{\left( m,t \right)}}-\sum\limits_{k=1}^{K}{{{\kappa }_{mkt}}} \right)}}+\sum\limits_{m=1}^{M}{\sum\limits_{k=1}^{K}{\sum\limits_{t=1}^{T}{\left( {{\kappa }_{mkt}}\log \left( {{u}_{mk}}{{w}_{kt}} \right) \right.}}} \\ 
 & -{{u}_{mk}}{{w}_{kt}}-\log {{f}_{\Gamma }}\left. \left( {{\kappa }_{mkt}}+1 \right) \right)+\sum\limits_{m=1}^{M}{\sum\limits_{k=1}^{K}{\left( \log {{\alpha }_{mk}}-{{\alpha }_{mk}}{{u}_{mk}} \right)}} \\ 
 & +\sum\limits_{k=1}^{K}{\sum\limits_{t=1}^{T}{\left( \log {{\beta }_{kt}}-{{\beta }_{kt}}{{w}_{kt}} \right),}}  
\end{split}
\end{equation}
where $f_{\delta } \left(\cdot \right)$ is the Dirac delta function, while $f_{\Gamma } \left(\cdot \right)$ is the Gamma function,
\begin{equation} \label{ZEqnNum865208} 
f_{\Gamma } \left(x\right)=\int\limits_{0}^{\infty }t^{x-1} e^{-t} dt .   
\end{equation} 
By utilizing a variational Poisson--Exponential Bayesian learning \cite{ss1,ss2,ss3,ss4}, these variational distributions can be evaluated as follows. 

The ${\rm {\mathcal D}}_{v} \left(\kappa _{mkt} \right)$ variational distribution is as
\begin{equation} \label{ZEqnNum489855} 
{\rm {\mathcal D}}_{v} \left(\kappa _{mkt} \right)={\rm {\mathcal M}}\left(\left. \kappa _{mkt} \right|\eta _{mkt} \right) 
\end{equation} 
where ${\rm {\mathcal M}}$ is a multinomial distribution, while $\eta _{mkt} $ is a multinomial parameter 
\begin{equation} \label{ZEqnNum425452} 
\eta _{mkt} =\frac{e^{{\mathbb{E}}\left(\log u_{mk} \right)+{\mathbb{E}}\left(\log w_{kt} \right)} }{\sum _{j}e^{{\mathbb{E}}\left(\log u_{mj} \right)+{\mathbb{E}}\left(\log w_{jt} \right)}  } ,   
\end{equation} 
while the ${\rm {\mathcal D}}_{v} \left(\kappa ^{\left(m,t\right)} \right)$ variational distribution is as
\begin{equation} \label{70)}
\begin{split}
  & \mathcal{M}\left( \left. {{\kappa }^{\left( m,t \right)}} \right|{{X}^{\left( m,t \right)}},\eta _{k}^{\left( m,t \right)} \right) \\ 
 & ={{f}_{\delta }}\left( {{X}^{\left( m,t \right)}}-\sum\limits_{k=1}^{K}{{{\kappa }_{mkt}}} \right){{X}^{\left( m,t \right)}}!\prod\limits_{k}{\tfrac{{{\left( {{\eta }_{mkt}} \right)}^{{{\kappa }_{mkt}}}}}{{{\kappa }_{mkt}}!}},  
\end{split}
\end{equation} 
where $\eta _{k}^{\left(m,t\right)} $ is a multinomial parameter vector
\begin{equation} \label{71)} 
\eta _{k}^{\left(m,t\right)} =\left(\eta _{k=1}^{\left(m,t\right)} ,\ldots ,\eta _{k=K}^{\left(m,t\right)} \right)^{T} ,  
\end{equation} 
such that
\begin{equation} \label{72)} 
\sum _{k=1}^{K}\eta _{k}^{\left(m,t\right)}  =1.  
\end{equation} 
The ${\rm {\mathcal D}}_{v} \left(u_{mk} \right)$ variational distribution is as
\begin{equation} \label{ZEqnNum204512}
\begin{split}
  & {{\mathcal{D}}_{v}}\left( {{u}_{mk}} \right) \\ 
 & ={{e}^{\left( \sum\limits_{t=1}^{T}{\mathbb{E}\left( {{\kappa }_{mkt}} \right)}\log {{u}_{mk}}-\left( \sum\limits_{t=1}^{T}{\mathbb{E}\left( {{w}_{kt}} \right)}+{{\alpha }_{mk}} \right){{u}_{mk}} \right)}} \\ 
 & =\mathcal{G}\left( \left. {{u}_{mk}} \right|{{{\tilde{\alpha }}}_{mk}}\left( A \right),{{{\tilde{\alpha }}}_{mk}}\left( B \right) \right),  
\end{split}
\end{equation} 
where ${\rm {\mathcal G}}\left(\cdot \right)$ is a Gamma distribution, 
\begin{equation} \label{ZEqnNum162591} 
{\rm {\mathcal G}}\left(x;a,b\right)=e^{\left(a-1\right)\log x-\frac{x}{b} -\log f_{\Gamma } \left(a\right)-a\log b} , 
\end{equation} 
where $a$ is a shape parameter, while $b$ is a scale parameter, $f_{\Gamma } \left(\cdot \right)$ is the Gamma function \eqref{ZEqnNum865208}. The entropy of \eqref{ZEqnNum162591} is as
\begin{equation} \label{75)} 
H\left({\rm {\mathcal G}}\left(x;a,b\right)\right)=-\left(a-1\right)\partial _{{\rm {\mathcal G}}_{\log } } \left(a\right)+\log b+a+\log f_{\Gamma } \left(a\right), 
\end{equation} 
where $\partial _{{\rm {\mathcal G}}_{\log } } \left(\cdot \right)$ is the derivative of the log gamma function (digamma function), 
\begin{equation} \label{ZEqnNum655695} 
\partial _{{\rm {\mathcal G}}_{\log } } \left(x\right)={\textstyle\frac{d\log f_{\Gamma } \left(x\right)}{dx}} ,  
\end{equation} 
while ${\mathbb{E}}\left(\kappa _{mkt} \right)$ is evaluated as
\begin{equation} \label{77)} 
{\mathbb{E}}\left(\kappa _{mkt} \right)=X^{\left(m,t\right)} \eta _{mkt} ,  
\end{equation} 
while $\tilde{\alpha }_{mk} \left(A\right)$ and $\tilde{\alpha }_{mk} \left(B\right)$ are control parameters for $U_{B} $, defined as 
\begin{equation} \label{ZEqnNum525301} 
\tilde{\alpha }_{mk} \left(A\right)=1+\sum _{t=1}^{T}{\mathbb{E}}\left(\kappa _{mkt} \right) , 
\end{equation} 
while $\tilde{\alpha }_{mk} \left(B\right)$ is defined as
\begin{equation} \label{ZEqnNum212545} 
\tilde{\alpha }_{mk} \left(B\right)={\textstyle\frac{1}{\sum _{t=1}^{T}{\mathbb{E}}\left(w_{kt} \right) +\alpha _{mk} }} .  
\end{equation} 
The ${\rm {\mathcal D}}_{v} \left(w_{kt} \right)$ variational distribution is as
\begin{equation}\label{ZEqnNum334896}
\begin{split}
  & {{\mathcal{D}}_{v}}\left( {{w}_{kt}} \right) \\ 
 & ={{e}^{\left( \sum\limits_{m=1}^{M}{\mathbb{E}\left( {{\kappa }_{mkt}} \right)}\log {{w}_{kn}}-\left( \sum\limits_{m=1}^{M}{\mathbb{E}\left( {{u}_{mk}} \right)}+{{\beta }_{kt}} \right){{w}_{kt}} \right)}} \\ 
 & =\mathcal{G}\left( \left. {{w}_{kt}} \right|{{{\tilde{\beta }}}_{kt}}\left( A \right),{{{\tilde{\beta }}}_{kt}}\left( B \right) \right), 
\end{split}
\end{equation}
where $\tilde{\beta }_{kt} \left(A\right)$ and $\tilde{\beta }_{kt} \left(B\right)$ are control parameters for $W$, defined as
\begin{equation} \label{ZEqnNum494117} 
\tilde{\beta }_{kt} \left(A\right)=1+\sum _{m=1}^{M}{\mathbb{E}}\left(\kappa _{mkt} \right) , 
\end{equation} 
and
\begin{equation} \label{ZEqnNum334111} 
\tilde{\beta }_{kt} \left(B\right)={\textstyle\frac{1}{\sum _{m=1}^{M}{\mathbb{E}}\left(u_{mk} \right) +\beta _{kt} }} .  
\end{equation} 
Given the variational parameters $\tilde{\alpha }_{mk} \left(A\right)$, $\tilde{\alpha }_{mk} \left(B\right)$, $\tilde{\beta }_{kt} \left(A\right)$ and $\tilde{\beta }_{kt} \left(B\right)$ in \eqref{ZEqnNum525301}, \eqref{ZEqnNum212545}, \eqref{ZEqnNum494117} and \eqref{ZEqnNum334111}, the estimates of $U_{B} $ and $W$ are realized by the determination of the Gamma means ${\mathbb{E}}\left(u_{mk} \right)$ and ${\mathbb{E}}\left(w_{kt} \right)$ \cite{ss1,ss2,ss3,ss4}. It can be verified that the mean ${\mathbb{E}}\left(w_{kt} \right)$ in \eqref{ZEqnNum204512}, \eqref{ZEqnNum212545} and \eqref{ZEqnNum334896} can be evaluated via \eqref{ZEqnNum494117} and \eqref{ZEqnNum334111} as a mean of a Gamma distribution
\begin{equation} \label{83)} 
{\mathbb{E}}\left(w_{kt} \right)=\tilde{\beta }_{kt} \left(A\right)\tilde{\beta }_{kt} \left(B\right),  
\end{equation} 
while ${\mathbb{E}}\left(\log w_{kt} \right)$ is as
\begin{equation} \label{84)} 
{\mathbb{E}}\left(\log w_{kt} \right)=\partial _{{\rm {\mathcal G}}_{\log } } \left(\tilde{\beta }_{kt} \left(A\right)\right)+\log \tilde{\beta }_{kt} \left(B\right),  
\end{equation} 
where $\partial _{{\rm {\mathcal G}}_{\log } } \left(\cdot \right)$ digamma function \eqref{ZEqnNum655695}. 

The mean ${\mathbb{E}}\left(u_{mk} \right)$ in \eqref{ZEqnNum334896} and \eqref{ZEqnNum334111} can be evaluated via \eqref{ZEqnNum525301} and \eqref{ZEqnNum212545}, as a mean of a Gamma distribution
\begin{equation} \label{85)} 
{\mathbb{E}}\left(u_{mk} \right)=\tilde{\alpha }_{mk} \left(A\right)\tilde{\alpha }_{mk} \left(B\right),  
\end{equation} 
and ${\mathbb{E}}\left(\log u_{mk} \right)$ is yielded as
\begin{equation} \label{86)} 
{\mathbb{E}}\left(\log u_{mk} \right)=\partial _{{\rm {\mathcal G}}_{\log } } \left(\tilde{\alpha }_{mk} \left(A\right)\right)+\log \tilde{\alpha }_{mk} \left(B\right).  
\end{equation} 
As the ${\rm {\mathcal D}}_{v} \left(\kappa _{mkt} \right)$, ${\rm {\mathcal D}}_{v} \left(u_{mk} \right)$ and ${\rm {\mathcal D}}_{v} \left(w_{kt} \right)$ variational distributions are determined via \eqref{ZEqnNum489855}, \eqref{ZEqnNum204512} and \eqref{ZEqnNum334896}the evaluation of \eqref{ZEqnNum150251} is straightforward.

Using the defined terms, the term ${\mathbb{E}}\left(\log {\rm {\mathcal D}}\left(\vec{X},\vec{\kappa },U_{B} ,\left. W\right|\zeta \right)\right)$ from \eqref{ZEqnNum672915} can be evaluated as
\begin{equation} \label{ZEqnNum518287}
\begin{split}
  & \mathbb{E}\left( \log \mathcal{D}\left( \vec{X},\vec{\kappa },{{U}_{B}},\left. W \right|\zeta  \right) \right) \\ 
  =&\sum\limits_{m=1}^{M}{\sum\limits_{t=1}^{T}{\mathbb{E}\left( \log {{f}_{\delta }}\left( {{X}^{\left( m,t \right)}}-\sum\limits_{k=1}^{K}{{{\kappa }_{mkt}}} \right) \right)}}+\sum\limits_{m=1}^{M}{\sum\limits_{k=1}^{K}{\mathbb{E}\left( \log {{u}_{mk}} \right)\sum\limits_{t=1}^{T}{\mathbb{E}\left( {{\kappa }_{mkt}} \right)}}} \\ 
 & +\sum\limits_{k=1}^{K}{\sum\limits_{t=1}^{T}{\mathbb{E}\left( \log {{w}_{kt}} \right)\sum\limits_{m=1}^{M}{\mathbb{E}\left( {{\kappa }_{mkt}} \right)}-}}\sum\limits_{m=1}^{M}{\sum\limits_{k=1}^{K}{\sum\limits_{t=1}^{T}{\mathbb{E}\left( {{u}_{mk}} \right)\mathbb{E}\left( {{w}_{kt}} \right)}}} \\ 
 & -\sum\limits_{m=1}^{M}{\sum\limits_{k=1}^{K}{\sum\limits_{t=1}^{T}{\mathbb{E}\left( \log {{f}_{\Gamma }}\left( {{\kappa }_{mkt}}+1 \right) \right)}}} \\ 
 & +\sum\limits_{m=1}^{M}{\sum\limits_{k=1}^{K}{\left( \log {{\alpha }_{mk}}-{{\alpha }_{mk}}\mathbb{E}\left( {{u}_{mk}} \right) \right)}} \\ 
 & +\sum\limits_{k=1}^{K}{\sum\limits_{t=1}^{T}{\left( \log {{\beta }_{kt}}-{{\beta }_{kt}}\mathbb{E}\left( {{w}_{kt}} \right) \right),}}  
\end{split}
\end{equation}
while the $H\left({\rm {\mathcal D}}_{v} \left(\vec{\kappa },U_{B} ,W\right)\right)$ entropy of the variational distribution from \eqref{ZEqnNum532577} can be evaluated as
\begin{equation} \label{ZEqnNum994815}
\begin{split}
  & H\left( {{\mathcal{D}}_{v}}\left( \vec{\kappa },{{U}_{B}},W \right) \right) \\ 
  =&\sum\limits_{m=1}^{M}{\sum\limits_{t=1}^{T}{\left( -\log {{f}_{\Gamma }}\left( {{X}^{\left( m,t \right)}}+1 \right)-\sum\limits_{k=1}^{K}{\mathbb{E}\left( {{\kappa }_{mkt}} \right)\log {{\eta }_{mkt}}} \right)}} \\ 
 & +\sum\limits_{m=1}^{M}{\sum\limits_{k=1}^{K}{\sum\limits_{t=1}^{T}{\mathbb{E}\left( \log {{f}_{\Gamma }}\left( {{\kappa }_{mkt}}+1 \right) \right)}}} \\ 
 & -\sum\limits_{m=1}^{M}{\sum\limits_{t=1}^{T}{\mathbb{E}\left( \log {{f}_{\delta }}\left( {{X}^{\left( m,t \right)}}-\sum\limits_{k=1}^{K}{{{\kappa }_{mkt}}} \right) \right)}} \\ 
 & +\sum\limits_{m=1}^{M}{\sum\limits_{k=1}^{K}{\left( -\left( {{{\tilde{\alpha }}}_{mk}}\left( A \right)-1 \right){{\partial }_{{{\mathcal{G}}_{\log }}}}\left( {{{\tilde{\alpha }}}_{mk}}\left( A \right) \right)+\log \left( {{{\tilde{\alpha }}}_{mk}}\left( B \right) \right)+{{{\tilde{\alpha }}}_{mk}}\left( A \right)+\log {{f}_{\Gamma }}\left( {{{\tilde{\alpha }}}_{mk}}\left( A \right) \right) \right)}} \\ 
 & +\sum\limits_{k=1}^{K}{\sum\limits_{t=1}^{T}{\left( -\left( {{{\tilde{\beta }}}_{kt}}\left( A \right)-1 \right){{\partial }_{{{\mathcal{G}}_{\log }}}}\left( {{{\tilde{\beta }}}_{k}}\left( A \right) \right)+\log \left( {{{\tilde{\beta }}}_{k}}\left( B \right) \right)+{{{\tilde{\beta }}}_{k}}\left( A \right)+\log {{f}_{\Gamma }}\left( {{{\tilde{\beta }}}_{k}}\left( A \right) \right) \right)}}.  
\end{split}
\end{equation}
Thus, from \eqref{ZEqnNum518287} and \eqref{ZEqnNum994815}, the lower bound ${\rm {\mathcal L}}_{{\rm {\mathcal D}}_{v} } $in \eqref{ZEqnNum672915} is as
\begin{equation} \label{ZEqnNum960383}
\begin{split}
  &{{\mathcal{L}}_{{{\mathcal{D}}_{v}}}} \\
   =&-\sum\limits_{m=1}^{M}{\sum\limits_{t=1}^{T}{\sum\limits_{k=1}^{K}{\mathbb{E}\left( {{u}_{mk}} \right)\mathbb{E}\left( {{w}_{kt}} \right)}}} \\ 
 & +\sum\limits_{m=1}^{M}{\sum\limits_{t=1}^{T}{\left( -\log {{f}_{\Gamma }}\left( {{X}^{\left( m,t \right)}}+1 \right)-\sum\limits_{k=1}^{K}{\mathbb{E}\left( {{\kappa }_{mkt}} \right)\log {{\eta }_{mkt}}} \right)}} \\ 
 & +\sum\limits_{m=1}^{M}{\sum\limits_{k=1}^{K}{\mathbb{E}\left( \log {{u}_{mk}} \right)\sum\limits_{t=1}^{T}{\mathbb{E}\left( {{\kappa }_{mkt}} \right)}}} \\ 
 & +\sum\limits_{k=1}^{K}{\sum\limits_{t=1}^{T}{\mathbb{E}\left( \log {{w}_{kt}} \right)\sum\limits_{m=1}^{M}{{{\kappa }_{mkt}}}}} \\ 
 & +\sum\limits_{m=1}^{M}{\sum\limits_{k=1}^{K}{\left( \log {{\alpha }_{mk}}-{{\alpha }_{mk}}\mathbb{E}\left( {{u}_{mk}} \right) \right)+}}\sum\limits_{k=1}^{K}{\sum\limits_{t=1}^{T}{\left( \log {{\beta }_{kt}}-{{\beta }_{kt}}\mathbb{E}\left( {{w}_{kt}} \right) \right)}} \\ 
 & +\sum\limits_{m=1}^{M}{\sum\limits_{k=1}^{K}{\left( -\left( {{{\tilde{\alpha }}}_{mk}}\left( A \right)-1 \right){{\partial }_{{{\mathcal{G}}_{\log }}}}\left( {{{\tilde{\alpha }}}_{mk}}\left( A \right) \right)+\log {{{\tilde{\alpha }}}_{mk}}\left( B \right)+{{{\tilde{\alpha }}}_{mk}}\left( A \right)+\log {{f}_{\Gamma }}\left( {{{\tilde{\alpha }}}_{mk}}\left( A \right) \right) \right)}} \\ 
 & +\sum\limits_{k=1}^{K}{\sum\limits_{t=1}^{T}{\left( -\left( {{{\tilde{\beta }}}_{kt}}\left( A \right)-1 \right){{\partial }_{{{\mathcal{G}}_{\log }}}}\left( {{{\tilde{\beta }}}_{kt}}\left( A \right) \right)+\log {{{\tilde{\beta }}}_{kt}}\left( B \right)+{{{\tilde{\beta }}}_{kt}}\left( A \right)+\log {{f}_{\Gamma }}\left( {{{\tilde{\beta }}}_{kt}}\left( A \right) \right) \right).}}  
\end{split}
\end{equation}
The next problem is the $\tilde{\tau }_{k}^{\left(t\right)} $ estimation of the control parameters $\alpha _{mk} ,\beta _{kt} $ in \eqref{ZEqnNum356288} as 
\begin{equation} \label{ZEqnNum645426} 
\tilde{\tau }_{mk}^{\left(t\right)} =\left\{E_{mk} ,F_{kt} \right\},  
\end{equation} 
such that $E_{mk} $ is a basis estimation
\begin{equation} \label{91)} 
E_{mk} \approx \alpha _{mk}  
\end{equation} 
and $F_{kt} $ is a system estimation
\begin{equation} \label{92)} 
F_{kt} \approx \beta _{kt} ,  
\end{equation} 
such that the variational lower bound ${\rm {\mathcal L}}_{{\rm {\mathcal D}}_{v} } $ in \eqref{ZEqnNum960383} is maximized \cite{ss1,ss2,ss3,ss4}. It is achieved for the unitary $U_{F} $ as follows. The maximization problem can be formalized via the $\partial \left({\rm {\mathcal L}}_{{\rm {\mathcal D}}_{v} } \right)$ derivative of ${\rm {\mathcal L}}_{{\rm {\mathcal D}}_{v} } $
\begin{equation} \label{93)} 
{\textstyle\frac{\partial \left({\rm {\mathcal L}}_{{\rm {\mathcal D}}_{v} } \right)}{\partial \alpha _{mk} }} ={\textstyle\frac{1}{\alpha _{mk} }} -{\mathbb{E}}\left(u_{mk} \right)+{\textstyle\frac{\partial \left(\log \left(\tilde{\alpha }_{mk} \left(B\right)\right)\right)}{\partial \alpha _{mk} }} =0,  
\end{equation} 
and
\begin{equation} \label{94)} 
{\textstyle\frac{\partial \left({\rm {\mathcal L}}_{{\rm {\mathcal D}}_{v} } \right)}{\partial \beta _{kt} }} ={\textstyle\frac{1}{\beta _{kt} }} -{\mathbb{E}}\left(w_{kt} \right)+{\textstyle\frac{\partial \left(\log \left(\tilde{\beta }_{kt} \left(B\right)\right)\right)}{\partial \beta _{kt} }} =0,  
\end{equation} 
which is solvable via \cite{ss1,ss3}
\begin{equation} \label{95)} 
\left(\alpha _{mk} \right)^{2} +\sum _{t=1}^{T}{\mathbb{E}}\left(w_{kt} \right)\alpha _{mk} - {\textstyle\frac{\sum _{t=1}^{T}{\mathbb{E}}\left(w_{kt} \right) }{{\mathbb{E}}\left(u_{mk} \right)}} =0,  
\end{equation} 
and
\begin{equation} \label{96)} 
\left(\beta _{kt} \right)^{2} +\sum _{m=1}^{M}{\mathbb{E}}\left(u_{mk} \right)\beta _{kt}  -{\textstyle\frac{\sum _{m=1}^{M}{\mathbb{E}}\left(u_{mk} \right) }{{\mathbb{E}}\left(w_{kt} \right)}} =0. 
\end{equation} 
After some calculations, $E_{mk} $ and $F_{kt} $ from \eqref{ZEqnNum645426} are as
\begin{equation} \label{ZEqnNum980009} 
E_{mk} ={\textstyle\frac{1}{2}} \left(-\sum _{t=1}^{T}{\mathbb{E}}\left(w_{kt} \right)+\left(\left(\sum _{t=1}^{T}{\mathbb{E}}\left(w_{kt} \right) \right)^{2} +4{\textstyle\frac{\sum _{t=1}^{T}{\mathbb{E}}\left(w_{kt} \right) }{{\mathbb{E}}\left(u_{mk} \right)}} \right)^{{\textstyle\frac{1}{2}} }  \right),  
\end{equation} 
and
\begin{equation} \label{ZEqnNum559004} 
F_{kt} ={\textstyle\frac{1}{2}} \left(-\sum _{m=1}^{M}{\mathbb{E}}\left(u_{mk} \right) +\left(\left(\sum _{m=1}^{M}{\mathbb{E}}\left(u_{mk} \right) \right)^{2} +4{\textstyle\frac{\sum _{m=1}^{M}{\mathbb{E}}\left(u_{mk} \right) }{{\mathbb{E}}\left(w_{kt} \right)}} \right)^{{\textstyle\frac{1}{2}} } \right), 
\end{equation} 
respectively.

From \eqref{ZEqnNum980009} and \eqref{ZEqnNum559004}, the $\tilde{\tau }_{mk}^{\left(t\right)} $ estimation in \eqref{ZEqnNum645426} is therefore straightforwardly yielded. Therefore, using the parameters $\tilde{\alpha }_{mk} \left(B\right),\tilde{\alpha }_{mk} \left(A\right),\tilde{\beta }_{kt} \left(A\right),\tilde{\beta }_{kt} \left(B\right)$ and $\eta _{mkt} $, the optimal variational distributions ${\rm {\mathcal D}}_{v} \left(\kappa _{mkt} \right)$, ${\rm {\mathcal D}}_{v} \left(u_{mk} \right)$ and ${\rm {\mathcal D}}_{v} \left(w_{kt} \right)$ can be substituted to estimate $\tilde{\tau }_{mk}^{\left(t\right)} $. 

Using \eqref{ZEqnNum980009} and \eqref{ZEqnNum559004}, the estimation of terms $u_{k} $ \eqref{ZEqnNum127140}, $w_{kt} $ \eqref{ZEqnNum324268} and $\kappa _{kt} $ \eqref{ZEqnNum972723} are yielded as
\begin{equation} \label{ZEqnNum908888} 
\tilde{u}_{mk} =E_{mk} e^{-E_{mk} \tilde{u}_{mk} } ,  
\end{equation} 
\begin{equation} \label{ZEqnNum508289} 
\tilde{w}_{kt} =F_{kt} e^{-F_{kt} \tilde{w}_{kt} } ,  
\end{equation} 
and
\begin{equation} \label{101)} 
\tilde{\kappa }_{mkt} =E_{mk} e^{-E_{mk} \tilde{u}_{mk} } F_{kt} e^{-F_{kt} \tilde{w}_{kt} } .  
\end{equation} 
The evaluation of \eqref{ZEqnNum980009} and \eqref{ZEqnNum559004} therefore is yielded in an iterative manner through the $\tilde{\alpha }_{mk} \left(B\right)$, $\tilde{\alpha }_{mk} \left(A\right)$, $\tilde{\beta }_{kt} \left(A\right)$, $\tilde{\beta }_{kt} \left(B\right)$ and $\eta _{mkt} $, and the $K^{*} $ optimal number of bases, $K$, is determined with respect to \eqref{ZEqnNum960383} such that
\begin{equation} \label{102)} 
K^{*} =\arg \mathop{\max }\limits_{K} {\rm {\mathcal L}}_{{\rm {\mathcal D}}_{v} } \left(K\right),  
\end{equation} 
where ${\rm {\mathcal L}}_{{\rm {\mathcal D}}_{v} } \left(K\right)$ refers to ${\rm {\mathcal L}}_{{\rm {\mathcal D}}_{v} } $ from \eqref{ZEqnNum960383} at a particular base number $K$. 

The proof is concluded here.
\end{proof}

The schematic representation of unitary $U_{F} $ is depicted in \fref{fig2}. 

 \begin{center}
\begin{figure*}[!htbp]
\begin{center}
\includegraphics[angle = 0,width=1\linewidth]{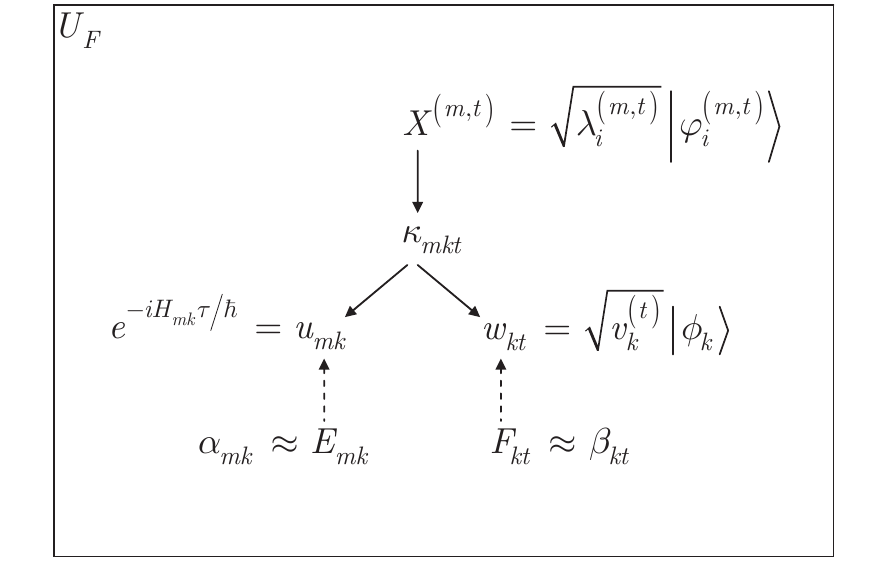}
\caption{Representation of the $U_{F} $ unitary over a total evolution time $T$, with $K$ factored bases and $M$ source systems ($M=2$ in our setting). The factorization is represented by the solid-line arrows. At a given $t$, $t=1,\ldots ,T$, the input system of $U_{F} $ subject of factorization is $X^{\left(m,t\right)} =\sqrt{\lambda _{i}^{\left(m,t\right)} } {\left| \varphi _{i}^{\left(m,t\right)}  \right\rangle} $, $m=1,\ldots ,M$. Term $\kappa _{mkt} $ is expressed as $\kappa _{mkt} =u_{mk} w_{kt} $, where ${{u}_{mk}}={{e}^{{-i{{H}_{mk}}\tau }/{\hbar }\;}}$ is a unitary, $u_{mk} \in {\mathbb{C}}$, $k=1,\ldots ,K$, which sets a computational basis for $w_{kt} $, $w_{kt} =W_{k}^{\left(t\right)} =\sqrt{v_{k}^{\left(t\right)} } {\left| \phi _{k}  \right\rangle} $. The basis matrix is $U_{B} =\left\{u_{mk} \right\}\in {\mathbb{C}}^{M\times K} $ with $K$ bases, $H_{mk} =G_{mk} {\left| k_{m}  \right\rangle} {\left\langle k_{m}  \right|} $ is a Hamiltonian, and $W=\left\{W_{k}^{\left(t\right)} =w_{kt} \right\}\in {\mathbb{C}}^{K\times T} $, $w_{kt} \in {\mathbb{C}}$. The factorization decomposes $X^{\left(m,t\right)} $ into $X^{\left(m,t\right)} =\left[U_{B} \vec{W}\right]_{mt} $, and for the total evolution $\vec{X}=U_{B} \vec{W}$, where $\vec{X}=\left\{X^{\left(1,t\right)} ,\ldots ,X^{\left(M,t\right)} \right\}_{t=1}^{T} $, while $\kappa _{kt} $ is as $\kappa _{mkt} =u_{mk} w_{kt} $. Terms $\alpha _{mk} $ and $\beta _{kt} $ are control parameters for $u_{mk} $ and $w_{kt} $ (controlling is depicted by the dashed-line arrows) to evaluate the parameters as $u_{mk} \simeq \alpha _{mk} e^{-\alpha _{mk} u_{mk} } $ and $w_{kt} \simeq \beta _{kt} e^{-\beta _{kt} w_{kt} } $, estimated by $E_{mk} $ and $F_{kt} $ as $\tilde{u}_{mk} =E_{mk} e^{-E_{mk} \tilde{u}_{mk} } $ and $\tilde{w}_{kt} =F_{kt} e^{-F_{kt} \tilde{w}_{kt} } $.} 
 \label{fig2}
 \end{center}
\end{figure*}
\end{center}

\subsubsection{Quantum Constant Q Transform}

As the $\left\{\tilde{u}_{mk} \right\}$ basis estimations \eqref{ZEqnNum908888} are determined via $\left\{E_{mk} \right\}$ \eqref{ZEqnNum980009}, the next problem is the partitioning of the $K$ bases with respect to $M$, see \eqref{ZEqnNum522860}. To achieve the partitioning, first the bases of $U_{B} $ are transformed by the $U_{CQT} $ is the quantum constant $Q$ transform \cite{qtr}. The $U_{CQT} $ operation is similar to the discrete QFT (quantum Fourier transform) transform \cite{ref22}, and defined in the following manner.

The $U_{CQT} $ transform is defined as
\begin{equation} \label{ZEqnNum312746} 
U_{CQT} \left({\left| k \right\rangle} ,m\right)={\textstyle\frac{1}{\sqrt{N} }} \sum _{j=0}^{N-1}f_{W} \left(j-m\right)e^{{2\pi ijQ\mathord{\left/ {\vphantom {2\pi ijQ m}} \right. \kern-\nulldelimiterspace} m} } {\left| j \right\rangle}  ={\left| \varphi _{k}  \right\rangle} ,  
\end{equation} 
where ${\left| k \right\rangle} $ is a quantum state of the computational basis $B$, and in the current setting
\begin{equation} \label{104)} 
N=K,  
\end{equation} 
and
\begin{equation} \label{105)} 
{\left| k \right\rangle} =E_{mk} ,  
\end{equation} 
thus $B$ is as
\begin{equation} \label{106)} 
B:\left\{{\left| 0 \right\rangle} ,\ldots ,{\left| K-1 \right\rangle} \right\},   
\end{equation} 
while $h$ is selected such that
\begin{equation} \label{107)} 
0\le \left(j-h\right)\le N-1=K-1 
\end{equation} 
holds, and $Q$ is defined via the following relation
\begin{equation} \label{108)} 
{\textstyle\frac{2\pi k}{K}} ={\textstyle\frac{2\pi Q}{h}} ,  
\end{equation} 
from which $Q$ is yielded at a given $h$, $k$ and $K$, as
\begin{equation} \label{109)} 
Q={\textstyle\frac{hk}{K}} ,  
\end{equation} 
while $f_{W} \left(\cdot \right)$ is a windowing function \cite{four} that localizes the wavefunctions of the quantum register, defined via parameter $h$ as
\begin{equation} \label{ZEqnNum133734} 
f_{W} \left(j-h\right)={\textstyle\frac{1}{2}} \left(1-\cos \left({\textstyle\frac{2\pi \left(h-m\right)}{K-1}} \right)\right).   
\end{equation} 
(Footnote: The function in \eqref{ZEqnNum133734} is the so-called Hanning window \cite{four}.) 

The ${\left| \varphi _{k}  \right\rangle} $ output states of $U_{CQT} $ therefore identify a set ${\rm {\mathcal S}}_{\varphi } $ of states, as
\begin{equation} \label{111)} 
{\rm {\mathcal S}}_{\varphi } :\left\{{\left| \varphi _{k}  \right\rangle} :k=0,\ldots ,K-1\right\} 
\end{equation} 
that formulates an orthonormal basis.

The $U_{CQT}^{\dag } $ inverse of $U_{CQT} $ will be processed as the $U_{P} $partitioning is completed, with the same $f_{W} \left(\cdot \right)$ windowing function, defined as
\begin{equation} \label{ZEqnNum324804} 
U_{CQT}^{\dag } \left({\left| k \right\rangle} ,h\right)={\textstyle\frac{1}{\sqrt{K} }} \sum _{j=0}^{K-1}f_{W} \left(j-h\right)e^{{-2\pi ijQ\mathord{\left/ {\vphantom {-2\pi ijQ h}} \right. \kern-\nulldelimiterspace} h} } {\left| j \right\rangle}  .  
\end{equation} 
Applying \eqref{ZEqnNum312746} on the $K$ estimated bases $\left\{E_{mk} \right\}$ yields the $C_{B} $ transformed bases, as 
\begin{equation} \label{ZEqnNum142233} 
\begin{split}
   {{C}_{B}}&={{U}_{CQT}}\left( {{U}_{B}} \right) \\ 
 & =\left\{ {{C}_{mk}} \right\}\in {{\mathbb{C}}^{M\times K}},  
\end{split}
\end{equation} 
where $C_{mk} $ is as,
\begin{equation} \label{114)} 
C_{mk} =U_{CQT} \left(E_{mk} \right).   
\end{equation} 
After the application of \eqref{ZEqnNum142233}, the resulting system is therefore as
\begin{equation} \label{ZEqnNum734223} 
C_{B} W=\left(U_{CQT} U_{B} \right)W, 
\end{equation} 
where $C_{B} W\in {\mathbb{C}}^{M\times T} $.

\subsubsection{Basis Partitioning Unitary}
\begin{theorem}
(Partitioning the bases of source systems.) The $Q$ transformed bases can be partitioned to $M$ partitions via the $U_{P} $ partitioning unitary operation.
\end{theorem}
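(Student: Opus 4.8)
The plan is to construct $U_{P}$ as a clustering-driven regrouping of the $K$ transformed bases, exploiting the localization produced by $U_{CQT}$ in \eqref{ZEqnNum312746}. First I would work in the representation of the transformed basis matrix $C_{B}=\{C_{mk}\}$ of \eqref{ZEqnNum142233}: because the windowing function $f_{W}$ in \eqref{ZEqnNum133734} localizes the wavefunctions $|\varphi_{k}\rangle$ of \eqref{ZEqnNum312746}, bases originating from the same source system $m$ acquire large mutual overlap, whereas bases associated with distinct source systems become approximately orthogonal. This localization is the structural feature that makes an unsupervised partition meaningful, and it is what distinguishes the role of $U_{CQT}$ as a preliminary operation before partitioning.

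Second, I would form a $K\times K$ affinity matrix $S$ whose entries $S_{kk'}=|\langle \varphi_{k}|\varphi_{k'}\rangle|^{2}$ are the pairwise fidelities of the transformed bases; each such entry is obtainable on the near-term device by a SWAP test, so no labelled data and no reference state are required, which is consistent with the blind, unsupervised regime of Problems~1--2. Applying an unsupervised clustering routine (a spectral or $k$-means step with $k=M$) to $S$ yields a disjoint assignment of the index set $\{1,\ldots,K\}$ into $M$ clusters $\mathcal{K}_{1},\ldots,\mathcal{K}_{M}$ with $|\mathcal{K}_{m}|=K_{m}$ and $\sum_{m=1}^{M}K_{m}=K$, matching the block decomposition already exhibited in \eqref{ZEqnNum106419}.

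Third, I would define $U_{P}$ as the operator that sends the basis carrying index $k\in\mathcal{K}_{m}$ into the $m$-th contiguous block of size $K_{m}$. Since any permutation of an orthonormal basis is unitary, $U_{P}$ is automatically a legitimate unitary, which is all that the statement requires; applied to $C_{B}W$ of \eqref{ZEqnNum734223} it produces the block-structured state $\alpha(\sum_{k_{1}=1}^{K_{1}}|k_{1}\rangle+\ldots+\sum_{k_{M}=1}^{K_{M}}|k_{M}\rangle)$ of \eqref{ZEqnNum106419}, whose $m=1$ block coincides with the target $|\Phi^{*}\rangle$ of \eqref{ZEqnNum423447} up to normalization. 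Applying $U_{CQT}^{\dagger}$ of \eqref{ZEqnNum324804} then returns to the original basis, completing the decomposition $U_{ML}=U_{F}U_{CQT}U_{P}U_{CQT}^{\dagger}$.

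The hard part will be the separability guarantee underlying the second step: proving that $S$ is close enough to block-diagonal that the clustering provably recovers the correct source assignment, rather than merely asserting it. This requires quantifying how strongly $f_{W}$ drives distinct source systems onto disjoint spectral supports and bounding the residual cross-cluster overlaps $|\langle \varphi_{k}|\varphi_{k'}\rangle|$ for $k,k'$ lying in different clusters; controlling these cross terms, and hence the spectral gap of the associated graph Laplacian, is what ultimately certifies that $U_{P}$ partitions the $K$ bases into the correct $M$ groups.
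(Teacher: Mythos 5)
There is a genuine gap at the heart of your second step: your affinity matrix carries no signal. By the paper's own construction, the set ${\rm {\mathcal S}}_{\varphi }=\left\{{\left| \varphi _{k} \right\rangle} \right\}$ output by $U_{CQT}$ in \eqref{ZEqnNum312746} formulates an \emph{orthonormal} basis, so the pairwise fidelities $S_{kk'}=\left|\left\langle \varphi _{k} | \varphi _{k'} \right\rangle \right|^{2}$ you propose to estimate by SWAP tests are exactly $\delta _{kk'}$: the matrix is the identity, independent of which source system a basis belongs to, and any spectral or $k$-means step applied to it returns an arbitrary partition. Your premise that bases from the same source system ``acquire large mutual overlap'' contradicts the orthonormality that $U_{CQT}$ is built to provide; the windowing function $f_{W}$ localizes the wavefunctions but does not make same-source states overlap. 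The clustering information does not live in state overlaps at all --- it lives in the coefficient structure of $S=C_{B} W$ in \eqref{ZEqnNum989553}, i.e., in how the $Q$-transformed basis estimates $C_{mk}$ of \eqref{ZEqnNum142233} co-evolve with $W$ over the evolution time $T$. That is precisely what the paper exploits: it forms a shifted-factorization estimate $\tilde{S}=\left\langle \left\langle {\rm {\mathcal R}{\mathcal E}}\right\rangle {\rm {\mathcal H}}\right\rangle$ \eqref{ZEqnNum654394} from a translation tensor ${\rm {\mathcal R}}$ and tensors ${\rm {\mathcal E}},{\rm {\mathcal H}}$, fixes $U_{P}$ by minimizing the quantum relative entropy $D\left(\left. \rho _{S} \right\| \tilde{S}\tilde{S}^{\dag } \right)$, and reads the partition off an argmax of tensor contractions, $\Omega =\arg \mathop{\max }\limits_{k} \left(\left[{\rm Q}\right]_{k} \right)$ via \eqref{ZEqnNum710735}--\eqref{ZEqnNum128969}. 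Indeed, the point of moving to the constant-$Q$ domain before partitioning is that distinct source systems appear there as \emph{translated} versions of common bases, which the translation tensor can detect --- a structure invisible to state overlaps.

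Your third step --- realizing $U_{P}$ as a permutation of basis indices into contiguous blocks, which is automatically unitary --- is unobjectionable and consistent with what $U_{P}$ effectively does. But the partition itself is the content of the theorem, and your mechanism for determining it is vacuous, while your announced repair (bounding cross-cluster overlaps $\left|\left\langle \varphi _{k} | \varphi _{k'} \right\rangle \right|$ and the spectral gap of the associated graph Laplacian) bounds quantities that orthonormality makes identically zero, respectively degenerate. A correct repair would replace the fidelity affinity with a similarity computed from the time profiles of $C_{B} W$ (e.g., comparing $\left\{C_{mk} w_{kt} \right\}_{t=1}^{T}$ across $k$), which is in essence the route the paper takes through its tensor estimation. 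Note also a smaller slip: the block-structured state in \eqref{ZEqnNum106419} is the output of $U_{F}$ (before $U_{CQT}$ acts), and the extraction of the $m=1$ block as ${\left| \Phi ^{*} \right\rangle}$ of \eqref{ZEqnNum423447} is performed by $\tilde{U}_{{\rm DSTFT}}^{\dag }$ and $U_{DFT}$ in Theorem 3, not by $U_{P}$ alone.
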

\begin{proof}
As the $U_{CQT} $ transforms of the $\left\{E_{mk} \right\}$ basis estimations \eqref{ZEqnNum908888} are determined via $C_{B} $ \eqref{ZEqnNum142233}, the $Q$ transformed bases are partitioned to $M$ partitions via the $U_{P} $ unitary operation, as follows. 

Let the system state from \eqref{ZEqnNum734223} be denoted by
\begin{equation} \label{ZEqnNum989553} 
S=C_{B} W 
\end{equation} 
and let $\tilde{S}$ be the estimation of $S$ \cite{clus1}, defined as
\begin{equation} \label{ZEqnNum654394} 
\tilde{S}=\left\langle \left\langle {\rm {\mathcal R}{\mathcal E}}\right\rangle {\rm {\mathcal H}}\right\rangle ,   
\end{equation} 
where 
\begin{equation} \label{118)} 
{\rm {\mathcal T}}\in \left\{{\rm {\mathcal R}},{\rm {\mathcal E}},{\rm {\mathcal H}}\right\} 
\end{equation} 
is a tensor (multidimensional array) \cite{tens1,tens2} with dimension $\dim \left({\rm {\mathcal T}}\right)$, and size
\begin{equation} \label{119)} 
s\left({\rm {\mathcal T}}\right)=\prod _{i=1}^{\dim \left({\rm {\mathcal T}}\right)}\left|d_{i} \left({\rm {\mathcal T}}\right)\right| ,   
\end{equation} 
where $\left|d_{i} \left({\rm {\mathcal T}}\right)\right|$ is the size of the $i$-th dimension $d_{i} \left({\rm {\mathcal T}}\right)$. 

Let 
\begin{equation} \label{120)} 
{\rm {\mathcal R}}={\rm {\mathcal A}}\circ {\rm {\mathcal B}} 
\end{equation} 
be a translation tensor of size
\begin{equation} \label{121)}
\begin{split}
   s\left( \mathcal{R} \right)&=\prod\limits_{i=1}^{\dim\left( \mathcal{R} \right)}{\left| {{d}_{i}}\left( \mathcal{R} \right) \right|} \\ 
 & =\prod\limits_{i=1}^{\dim\left( \mathcal{A} \right)}{\left| {{d}_{i}}\left( \mathcal{A} \right) \right|}\times \prod\limits_{i=1}^{\dim\left( \mathcal{B} \right)}{\left| {{d}_{i}}\left( \mathcal{B} \right) \right|,}  
\end{split}
\end{equation}
with 
\begin{equation} \label{122)} 
\dim \left({\rm {\mathcal R}}\right)=3,   
\end{equation} 
as
\begin{equation} \label{123)} 
\left|d_{1} \left({\rm {\mathcal R}}\right)\right|=M,   
\end{equation} 
\begin{equation} \label{124)} 
\left|d_{2} \left({\rm {\mathcal R}}\right)\right|=1,   
\end{equation} 
and
\begin{equation} \label{125)} 
\left|d_{3} \left({\rm {\mathcal R}}\right)\right|=M 
\end{equation} 
and let
\begin{equation} \label{126)} 
{\rm {\mathcal E}}={\rm {\mathcal A}}\circ {\rm {\mathcal C}} 
\end{equation} 
be a tensor of size
\begin{equation} \label{127)}
\begin{split}
   s\left( \mathcal{E} \right)&=\prod\limits_{i=1}^{\dim\left( \mathcal{E} \right)}{\left| {{d}_{i}}\left( \mathcal{E} \right) \right|} \\ 
 & =\prod\limits_{i=1}^{\dim\left( \mathcal{A} \right)}{\left| {{d}_{i}}\left( \mathcal{A} \right) \right|}\times \prod\limits_{i=1}^{\dim\left( \mathcal{C} \right)}{\left| {{d}_{i}}\left( \mathcal{C} \right) \right|,}  
\end{split}
\end{equation} 
with
\begin{equation} \label{128)} 
\dim \left({\rm {\mathcal E}}\right)=2 
\end{equation} 
as
\begin{equation} \label{129)} 
\left|d_{1} \left({\rm {\mathcal E}}\right)\right|=M,    
\end{equation} 
\begin{equation} \label{130)} 
\left|d_{2} \left({\rm {\mathcal E}}\right)\right|=K, 
\end{equation} 
and with
\begin{equation} \label{131)} 
\dim \left({\rm {\mathcal H}}\right)=3,   
\end{equation} 
as
\begin{equation} \label{132)} 
\left|d_{1} \left({\rm {\mathcal R}}\right)\right|=1,   
\end{equation} 
\begin{equation} \label{133)} 
\left|d_{2} \left({\rm {\mathcal R}}\right)\right|=K 
\end{equation} 
and
\begin{equation} \label{134)} 
\left|d_{3} \left({\rm {\mathcal R}}\right)\right|=T,   
\end{equation} 
thus
\begin{equation} \label{135)} 
\dim \left({\rm {\mathcal A}}\right)=M 
\end{equation} 
and
\begin{equation} \label{136)} 
\dim \left({\rm {\mathcal B}}\right)=M 
\end{equation} 
while
\begin{equation} \label{137)} 
\dim \left({\rm {\mathcal C}}\right)=K.    
\end{equation} 
The term $\left\langle {\rm {\mathcal R}{\mathcal E}}\right\rangle $ is evaluated as
\begin{equation} \label{138)}
\begin{split}
  & {{\left\langle \mathcal{R}\mathcal{E} \right\rangle }_{\left\{ 1:\dim\left( \mathcal{A} \right),1:\dim\left( \mathcal{A} \right) \right\}}}\left( {{j}_{1}},\ldots ,{{j}_{\dim\left( \mathcal{B} \right)}},{{k}_{1}},\ldots ,{{k}_{\dim\left( \mathcal{C} \right)}} \right) \\ 
 & =\sum\limits_{{{i}_{1}}=1}^{{{d}_{1}}\left( \mathcal{A} \right)}{\cdots }\sum\limits_{{{i}_{\dim\left( \mathcal{A} \right)}}=1}^{{{d}_{\dim\left( \mathcal{A} \right)}}\left( \mathcal{A} \right)}{\mathcal{R}\left( {{i}_{1}},\ldots ,{{i}_{\dim\left( \mathcal{A} \right)}},{{j}_{1}},\ldots ,{{j}_{\dim\left( \mathcal{B} \right)}} \right)\mathcal{E}\left( {{i}_{1}},\ldots ,{{i}_{\dim\left( \mathcal{A} \right)}},{{k}_{1}},\ldots ,{{k}_{\dim\left( \mathcal{C} \right)}} \right),}  
\end{split}
\end{equation}
where ${\rm {\mathcal R}}\left(i,j\right)$ is the indexing for the elements of the tensor.

Let ${\rm {\mathcal E}}\left(\forall m,k\right)$ refer to the $j$-th column of ${\rm {\mathcal E}}$, and let ${\rm {\mathcal H}}\left(1,k,\forall t\right)$ refer to the $j$-th lateral slice of ${\rm {\mathcal H}}$. Then, let be a $U_{P} $ unitary operation that achieves the decomposition of \eqref{ZEqnNum654394} with respect to a given $k$, $k=1,\ldots ,K$, as   
\begin{equation} \label{ZEqnNum520467} 
\left[S\right]_{k} =\left\langle \left\langle {\rm {\mathcal R}{\mathcal E}}\left(\forall m,k\right)\right\rangle {\rm {\mathcal H}}\left(1,k,\forall t\right)\right\rangle  
\end{equation} 
with a particular cost function $f\left(U_{P} \right)$ of the $U_{P} $ unitary defined via the quantum relative entropy function, as
\begin{equation} \label{140)}
\begin{split}
   f\left( {{U}_{P}} \right)&=\underset{{\tilde{S}}}{\mathop{\min }}\,D\left( \left. {{\rho }_{S}} \right\|\tilde{S}{{{\tilde{S}}}^{\dagger }} \right) \\ 
 & =\underset{{\tilde{S}}}{\mathop{\min }}\,\text{Tr}\left( {{\rho }_{S}}\log \left( {{\rho }_{S}} \right) \right)-\text{Tr}\left( {{\rho }_{S}}\log \left( \tilde{S}{{{\tilde{S}}}^{\dagger }} \right) \right),  
\end{split}
\end{equation} 
where $\rho _{S} $ is the density matrix associated with $S$ is as in \eqref{ZEqnNum989553},
\begin{equation} \label{141)} 
\rho _{S} =U_{P} U_{CQT} U_{F} \left(\sum _{m=1}^{M}\sum _{t=1}^{T}\vec{X}^{\left(m,t\right)} \left(\vec{X}^{\left(m,t\right)} \right)^{\dag }   \right), 
\end{equation} 
while $\tilde{S}$ is given in \eqref{ZEqnNum654394}.

Using \eqref{ZEqnNum520467}, the $Q$-transformed bases are partitioned into $M$ classes, the partition $\Omega $ outputted by $U_{P} $ is evaluated as
\begin{equation} \label{ZEqnNum710735} 
\Omega =\arg \mathop{\max }\limits_{k} \left(\left[{\rm Q}\right]_{k} \right), 
\end{equation} 
where ${\rm Q}$ is a $1\times K$ size matrix, such that 
\begin{equation} \label{ZEqnNum128969} 
\left[{\rm Q}\right]_{k} =\sum _{m=1}^{M}\left\langle \left\langle {\rm {\mathcal R}}\left(\forall ,1,\forall \right){\rm {\mathcal E}}\left(\forall ,k\right)\right\rangle {\rm {\mathcal H}}\left(1,k,\forall \right)\right\rangle  .   
\end{equation} 
Since $M=2$ in our setting, the partition \eqref{ZEqnNum710735} can be rewritten as
\begin{equation} \label{144)} 
\Omega =\Omega _{Q}^{\left(1\right)} +\Omega _{Q}^{\left(2\right)} ,   
\end{equation} 
where $\Omega _{Q}^{\left(m\right)} $ identifies a cluster of $K_{m} $ $Q$-transformed bases for $m$-th system state, 
\begin{equation} \label{145)} 
\Omega _{Q}^{\left(m\right)} =\left\{\Omega _{Q}^{\left(m,k_{m} \right)} \right\}_{k_{m} =1}^{K_{m} } ,  
\end{equation} 
of 
\begin{equation} \label{146)} 
\left|\Omega _{Q}^{\left(m\right)} \right|=K_{m}  
\end{equation} 
bases formulated via the base estimations \eqref{ZEqnNum908888} for the $m$-th system state in \eqref{ZEqnNum522860}, such that 
\begin{equation} \label{147)} 
\sum _{m=1}^{M}K_{m}  =K.   
\end{equation} 
Since the partitioning is made over the $Q$ transformed bases, the output of $U_{P} $ is then transformed by the $U_{CQT}^{\dag } $ inverse transformation \eqref{ZEqnNum324804}.
\end{proof}

\subsubsection{Inverse Quantum Constant Q Transform}
Applying the $U_{CQT}^{\dag } $ inverse transformation \eqref{ZEqnNum324804} on the partitions \eqref{ZEqnNum128969} of the $Q$ transformed bases yields the decomposition of the bases of $U_{B} $ onto $M$ classes, as
\begin{equation} \label{148)} 
U_{CQT}^{\dag } \left(\Omega \right)=\theta =\sum _{m=1}^{M}\gamma ^{\left(m\right)} ,  
\end{equation} 
and since $M=2$ 
\begin{equation} \label{149)} 
\theta =\gamma ^{\left(1\right)} +\gamma ^{\left(2\right)} ,   
\end{equation} 
where $\gamma ^{\left(m\right)} $ identifies a cluster of $K_{m} $ bases for $m$-th system state.

Therefore, the resulting system state is as
\begin{equation} \label{ZEqnNum373293}
\begin{split}
  & U_{CQT}^{\dagger }\left( {{U}_{P}}\left( {{C}_{B}}W \right) \right) \\ 
 & =U_{CQT}^{\dagger }\left( {{U}_{P}}{{U}_{CQT}}{{U}_{B}} \right)W \\ 
 & =\chi W.  
\end{split}
\end{equation} 
The next problem is therefore the evaluation of the estimations of the $M=2$ source systems $\rho _{in} $ and $\zeta _{QR}^{\left(t\right)} $, as given in \eqref{ZEqnNum275626} from $\chi W$. Using the system state \eqref{ZEqnNum373293}, the system separation is produced by the $U_{{\rm DSTFT}}^{\dag } $ unitary that realizes the inverse quantum DSTFT (discrete short-time Fourier transform) \cite{four}.

\subsection{Inverse Quantum DSTFT and Quantum DFT}
The result of unitary $U_{ML} $ is evaluated further by the $U_{{\rm DSTFT}}^{\dag } $ unitary.
\begin{theorem}
 (Target source system recovery). Source system $m=1$ can be extracted by the $U_{{\rm DSTFT}}^{\dag } $ and $U_{DFT} $ discrete quantum Fourier transform on the output of an HRE quantum memory.
\end{theorem}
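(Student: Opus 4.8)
The plan is to begin from the output of the quantum machine learning unitary, the partitioned state $\chi W$ of \eqref{ZEqnNum373293}, in which the $K$ factored bases have already been regrouped by $U_{P} $ into the two clusters of \eqref{149)}, where $\gamma ^{\left(1\right)} $ carries the $K_{1} $ bases of the target source system $m=1$ and $\gamma ^{\left(2\right)} $ carries the residual $m=2$. Because Theorem~2 guarantees that these clusters are supported on disjoint subsets of the orthonormal states $\mathcal{S}_{\varphi } $ produced by $U_{CQT} $, isolating $\gamma ^{\left(1\right)} $ amounts to a projection onto the target subspace, so the recovery reduces to inverting the time-frequency localization that the windowed transform $U_{CQT} $ imposed through the Hanning window $f_{W} $ of \eqref{ZEqnNum133734}.

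First I would apply the inverse quantum DSTFT $U_{{\rm DSTFT}}^{\dag } $ to the target cluster $\gamma ^{\left(1\right)} $. The forward DSTFT writes each basis component as a windowed, frequency-shifted segment, and its inverse reconstructs the underlying waveform by overlap-adding these segments over the admissible window positions $h$ of \eqref{107)}. The essential fact to invoke is that the Hanning window \eqref{ZEqnNum133734} satisfies the constant-overlap-add (COLA) condition, so that $\sum _{h} f_{W} \left(t-h\right)$ is independent of $t$; this renders the overlap-add reconstruction exact on the support of $\gamma ^{\left(1\right)} $ while simultaneously cancelling the cross contributions from $\gamma ^{\left(2\right)} $, which occupy the complementary index set.

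Next I would apply the quantum discrete Fourier transform $U_{DFT} $ to the reconstructed time-domain state, mapping it from the frequency-localized $U_{CQT} $ representation back into the computational basis of the register. Tracking the chain $U_{DFT} U_{{\rm DSTFT}}^{\dag } U_{ML} $ applied to $\sigma _{QR} $ and using the factorization identity $\tilde{X}\approx X$ of \eqref{ZEqnNum918527} from Theorem~1, the output collapses onto the target state $\left| \Phi ^{*} \right\rangle $ of \eqref{ZEqnNum423447}, so that $\sigma _{out} \approx \sum _{i=1}^{n}\lambda _{i}^{\left(in\right)} \left| \psi _{i} \right\rangle \left\langle \psi _{i} \right|$ as required by \eqref{18)}.

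The hard part will be establishing that the concatenation genuinely suppresses the residual source $m=2$ rather than merely attenuating it. This requires that the disjointness of the partition from Theorem~2 survive the inverse transform $U_{CQT}^{\dag } $ of \eqref{ZEqnNum324804}, and that the COLA property of $f_{W} $ hold for the hop structure enforced by \eqref{107)}, so that the residual cross-terms cancel in the overlap-add instead of leaking into the target band. Once this cancellation is made quantitative, the leftover residual energy bounds the deviation in \eqref{18)} and thereby fixes the achievable output SNR of \eqref{19)}, with the explicit value deferred to Theorem~4.
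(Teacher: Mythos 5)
Your proposal rests on the wrong cancellation mechanism, and this is a genuine gap rather than a stylistic difference. You invoke the constant-overlap-add property of the Hanning window \eqref{ZEqnNum133734} to claim that the overlap-add reconstruction is ``exact on the support of $\gamma ^{\left(1\right)}$ while simultaneously cancelling the cross contributions from $\gamma ^{\left(2\right)}$.'' But COLA is a perfect-reconstruction guarantee for the \emph{whole} signal: if $\sum _{h}f_{W} \left(t-h\right)$ is constant, the inverse transform faithfully reproduces every component present, including the residual cluster $\gamma ^{\left(2\right)}$. A reconstruction identity cannot selectively suppress one source; the disjointness of the clusters gives you, at best, a \emph{projection} onto the target subspace, which is non-unitary and is not something the $U_{{\rm DSTFT}}^{\dag } U_{DFT}$ chain in the theorem statement can implement. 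Your own closing paragraph flags exactly this worry (``suppresses \ldots rather than merely attenuating''), but the tools you propose cannot close it.

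The paper's proof works differently and sidesteps the window entirely. It first rewrites the register state \eqref{ZEqnNum373293} in the offset form \eqref{ZEqnNum656752}, ${\left| \chi W \right\rangle} =\alpha \sum _{m}\sum _{k_{m} }{\left| k_{1} +x_{m,k_{m} }  \right\rangle}$ with $x_{1,k_{1} } =0$, then defines the \emph{modified} unitary $\tilde{U}_{{\rm DSTFT}}^{\dag }$ in \eqref{ZEqnNum882164}, whose phase is keyed to the parameter $k_{1}$ made available by the $U_{ML}$ block, and explicitly sets $f_{W} \left(j-h\right)=1$ in \eqref{161)} --- so the Hanning window plays no role at this stage, contrary to the centerpiece of your argument. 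Suppression of $m=2$ then comes from destructive interference of the phases $e^{{-2\pi ijx_{m,k_{m} } \mathord{\left/ {\vphantom {-2\pi ijx_{m,k_{m} }  K}} \right. \kern-\nulldelimiterspace} K} }$ attached to the nonzero offsets: the amplitude concentrates on $j={\textstyle\frac{K}{k_{1} }}$ with probability ${\textstyle\frac{K_{1}^{2} }{K^{2} }}$ as in \eqref{166)} and vanishes for all other $j$, after which $U_{DFT}$, taken over $K_{1}$ dimensions, maps the concentrated state to ${\left| \Phi ^{*}  \right\rangle}$ of \eqref{ZEqnNum423447} via \eqref{ZEqnNum896340}. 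To repair your proof you would need to replace the COLA argument with this interference computation (or an equivalent coherent mechanism), and in particular account for how the classical side-information $k_{1}$ from $U_{ML}$ enters the definition of the inverse transform --- without it, no choice of window makes the residual terms cancel.
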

\begin{proof}
The $U_{{\rm DSTFT}}^{\dag } $ inverse quantum DSTFT transformation applied to a state ${\left| k \right\rangle} $ of the computational basis
\begin{equation} \label{151)} 
B:\left\{{\left| 0 \right\rangle} ,\ldots ,{\left| K-1 \right\rangle} \right\},   
\end{equation} 
is defined as
\begin{equation} \label{ZEqnNum876808} 
U_{{\rm DSTFT}}^{\dag } \left({\left| k \right\rangle} ,h\right)={\textstyle\frac{1}{\sqrt{K} }} \sum _{j=0}^{K-1}f_{W} \left(j-h\right)e^{{-2\pi ijk\mathord{\left/ {\vphantom {-2\pi ijk K}} \right. \kern-\nulldelimiterspace} K} } {\left| j \right\rangle}  ={\left| \psi _{k}  \right\rangle} ,  
\end{equation} 
where $h$ is selected such that
\begin{equation} \label{153)} 
0\le \left(j-h\right)\le K-1 
\end{equation} 
holds, set 
\begin{equation} \label{154)} 
{\rm {\mathcal S}}_{\psi } :\left\{{\left| \psi _{k}  \right\rangle} :k=0,\ldots ,K-1\right\} 
\end{equation} 
formulates an new orthonormal basis, while $f_{W} \left(\cdot \right)$ is a windowing function \cite{four} .

Using system state $\chi W$ in \eqref{ZEqnNum373293}, let $\gamma ^{\left(m,k\right)} $ be a $k$-th basis of cluster $\gamma ^{\left(m\right)} $, and let $\left(\chi W\right)^{\left(m,t\right)} $ be defined as
\begin{equation} \label{155)} 
\left(\chi W\right)^{\left(m,t\right)} =\left[\chi W\right]_{mk} =\sum _{k=1}^{K}\gamma ^{\left(m,k\right)} W_{k}^{\left(m,t\right)}   
\end{equation} 
and let system ${\left| \chi W \right\rangle} $ identify \eqref{ZEqnNum106419} as
\begin{equation} \label{ZEqnNum423661} 
{\left| \chi W \right\rangle} =\alpha \sum _{m=1}^{M}\sum _{k_{m} =1}^{K_{m} }{\left| k_{m}  \right\rangle}   ,   
\end{equation} 
where ${\left| k_{m}  \right\rangle} $ is the eigenvector of the Hamiltonian of $\gamma ^{\left(m,k_{m} \right)} $, $K_{m} $ is the cardinality of cluster $\gamma ^{\left(m\right)} $, while $\sum _{m=1}^{M}\sum _{k_{m} =1}^{K_{m} }\alpha   =1$. 

Since the ${\left| k_{1}  \right\rangle} $ values are some parameters of $U_{ML} $, we can redefine \eqref{ZEqnNum423661} as
\begin{equation} \label{ZEqnNum656752} 
{\left| \chi W \right\rangle} =\alpha \sum _{m=1}^{M}\sum _{k_{m} =1}^{K_{m} }{\left| k_{1} +x_{m,k_{m} }  \right\rangle}   ,  
\end{equation} 
where 
\begin{equation} \label{158)} 
x_{m,k_{m} } =\left\{\begin{array}{c} {0,{\rm \; if\; }m=1} \\ {!0,{\rm \; otherwise}} \end{array}\right. , 
\end{equation} 
and
\begin{equation} \label{159)} 
\alpha ={\textstyle\frac{1}{\sqrt{K} }} .   
\end{equation} 
In our setting, using $k_{m=1} $ as input parameter available from the $U_{ML} $ block, we redefine the formula of \eqref{ZEqnNum876808} via a unitary $\tilde{U}_{{\rm DSTFT}}^{\dag } $, as
\begin{equation} \label{ZEqnNum882164} 
\tilde{U}_{{\rm DSTFT}}^{\dag } \left({\left| k_{m}  \right\rangle} ,h\right)={\textstyle\frac{1}{\sqrt{K} }} \sum _{j=0}^{K-1}f_{W} \left(j-h\right)e^{{-2\pi ijk_{1} \mathord{\left/ {\vphantom {-2\pi ijk_{1}  K}} \right. \kern-\nulldelimiterspace} K} } {\left| j \right\rangle}  ={\left| \psi _{k_{m} }  \right\rangle} , 
\end{equation} 
where we set $f_{W} \left(j-h\right)$ to unity,
\begin{equation} \label{161)} 
f_{W} \left(j-h\right)=1.   
\end{equation} 
Thus, applying \eqref{ZEqnNum882164} on \eqref{ZEqnNum656752} yields
\begin{equation} \label{ZEqnNum613223}
\begin{split}
  & \tilde{U}_{\text{DSTFT}}^{\dagger }\left( \alpha \sum\limits_{m=1}^{M}{\sum\limits_{{{k}_{m}}=1}^{{{K}_{m}}}{\left| {{k}_{1}}+{{x}_{m,{{k}_{m}}}} \right\rangle }} \right) \\ 
 & =\tfrac{1}{\sqrt{K}}\sum\limits_{j=0}^{K-1}{\left( \left( \alpha \sum\limits_{m=0}^{M-1}{\sum\limits_{{{k}_{m}}=0}^{{{K}_{m}}-1}{{{e}^{{-2\pi ij{{k}_{1}}}/{K}\;}}}} \right){{e}^{{-2\pi ij{{x}_{m,{{k}_{m}}}}}/{K}\;}} \right)\left| j \right\rangle } \\ 
 & =\tfrac{1}{\sqrt{K}}\left( \sum\limits_{j=0}^{K-1}{{{e}^{{-2\pi ij{{x}_{m,{{k}_{m}}}}}/{K}\;}}} \right)\left( \alpha \sum\limits_{m=0}^{M-1}{\sum\limits_{{{k}_{m}}=0}^{{{K}_{m}}-1}{{{e}^{{-2\pi ij{{k}_{1}}}/{K}\;}}}} \right)\left| j \right\rangle ,  
\end{split}
\end{equation} 
where
\begin{equation} \label{163)} 
j={\textstyle\frac{K}{k_{m} }} ,   
\end{equation} 
and $\sum _{j=0}^{K-1}e^{{-2\pi ijx_{m,k_{m} } \mathord{\left/ {\vphantom {-2\pi ijx_{m,k_{m} }  K}} \right. \kern-\nulldelimiterspace} K} }  =1$, thus \eqref{ZEqnNum613223} can be rewritten as
\begin{equation} \label{ZEqnNum249034}
\begin{split}
  & \tilde{U}_{\text{DSTFT}}^{\dagger }\left( \alpha \sum\limits_{m=1}^{M}{\sum\limits_{{{k}_{m}}=1}^{{{K}_{m}}}{\left| {{k}_{1}}+{{x}_{m,{{k}_{m}}}} \right\rangle }} \right) \\ 
 & =\tfrac{1}{\sqrt{K}}\left( \alpha \sum\limits_{m=0}^{M-1}{\sum\limits_{{{k}_{m}}=0}^{{{K}_{m}}-1}{{{e}^{{-2\pi i\left( \tfrac{K}{{{k}_{m}}} \right){{k}_{1}}}/{K}\;}}}} \right)\left| \tfrac{K}{{{k}_{m}}} \right\rangle .  
\end{split}
\end{equation} 
As follows, if 
\begin{equation} \label{165)} 
j={\textstyle\frac{K}{k_{1} }} ,   
\end{equation} 
then, the resulting $\Pr \left(j\right)$ probability is
\begin{equation}\label{166)}
\begin{split}
   \Pr \left( j \right)&=\tfrac{1}{K}{{\left| \alpha \sum\limits_{k=0}^{{{K}_{1}}-1}{{{e}^{{-2\pi ij{{k}_{1}}}/{K}\;}}} \right|}^{2}} \\ 
 & =\tfrac{1}{K}{{\left| \alpha \sum\limits_{k=0}^{{{K}_{1}}-1}{{{e}^{{-2\pi i\tfrac{K}{{{k}_{1}}}{{k}_{1}}}/{K}\;}}} \right|}^{2}} \\ 
 & =\tfrac{1}{K}{{\left| \alpha  \right|}^{2}}K_{1}^{2} \\ 
 & =\tfrac{1}{{{K}^{2}}}K_{1}^{2},  
\end{split}
\end{equation} 
while for the remaining $j$-s, the probabilities are vanished out, thus
\begin{equation} \label{167)} 
\Pr \left(j\right)=0,  
\end{equation} 
if
\begin{equation} \label{168)} 
j\ne {\textstyle\frac{K}{k_{1} }} .  
\end{equation} 
Therefore, applying the $U_{DFT} $ discrete quantum Fourier transform on the resulting system state \eqref{ZEqnNum249034}, defined in our setting as
\begin{equation} \label{169)} 
U_{DFT} \left({\left| k \right\rangle} \right)={\textstyle\frac{1}{\sqrt{K_{1} } }} \sum _{j=0}^{K_{1} -1}e^{{2\pi ijk\mathord{\left/ {\vphantom {2\pi ijk K_{1} }} \right. \kern-\nulldelimiterspace} K_{1} } } {\left| j \right\rangle} ,  
\end{equation} 
yields the source system $m=1$ in terms of the $K_{1} $ bases, as
\begin{equation} \label{ZEqnNum896340}
\begin{split}
   {{U}_{DFT}}\tilde{U}_{\text{DSTFT}}^{\dagger }\left( \alpha \sum\limits_{m=1}^{M}{\sum\limits_{{{k}_{m}}=1}^{{{K}_{m}}}{\left| {{k}_{1}}+{{x}_{m,{{k}_{m}}}} \right\rangle }} \right)&=\tfrac{1}{\sqrt{{{K}_{1}}}}\sum\limits_{{{k}_{m}}=1}^{{{K}_{1}}}{\left| {{k}_{1}} \right\rangle } \\ 
 & =\left| {{\Phi }^{*}} \right\rangle ,  
\end{split}
\end{equation} 
that identifies the target system from \eqref{ZEqnNum423447}.

The proof is concluded here.
\end{proof}

The state of the $QR$ quantum register after the $\tilde{U}_{{\rm CQT}}^{\dag } $ operation and after the $\tilde{U}_{{\rm DSTFT}}^{\dag } $ operation is depicted in \fref{fig3}. 

 \begin{center}
\begin{figure*}[!htbp]
\begin{center}
\includegraphics[angle = 0,width=1\linewidth]{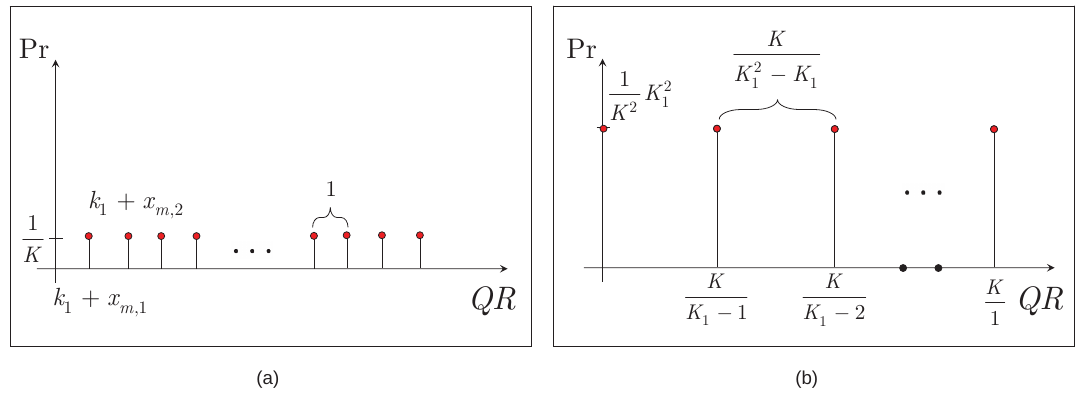}
\caption{(a) The state of the $QR$ quantum register after the $\tilde{U}_{{\rm CQT}}^{\dag } $ operation. The quantum register contains $K=\sum _{m}K_{m}  $ states, ${\left| k_{1} +x_{m,k_{m} }  \right\rangle} $, each with probability ${{\left| \alpha  \right|}^{2}}={1}/{K}\;$, with a unit distance between the states (depicted by the red dots). (b) The state of the $QR$ quantum register after the $\tilde{U}_{{\rm DSTFT}}^{\dag } $ operation. The quantum register contains $K_{1} $ quantum states, ${\left| {\textstyle\frac{K}{k_{1} }}  \right\rangle} $, $k_{1} =0,\ldots ,K_{1} -1$, each with probability ${\textstyle\frac{1}{K}} \left|\alpha \right|^{2} K_{1}^{2} ={\textstyle\frac{1}{K^{2} }} K_{1}^{2} $, with a distance ${\textstyle\frac{K}{K_{1}^{2} -K_{1} }} $ between the states (depicted by the red dots; the vanished-out states of the quantum register are depicted by the black dots).} 
 \label{fig3}
 \end{center}
\end{figure*}
\end{center}

\section{Retrieval Efficiency}
\label{sec4}
This section evaluates the retrieval efficiency of an HRE quantum memory in terms of the achievable output SNR values.
\begin{theorem}
(Retrieval efficiency of an HRE quantum memory). The SNR of the output quantum system of an HRE quantum memory is evolvable from the difference of the wave function energy ratios taken between the input system, the quantum register system, and the output quantum system.
\end{theorem}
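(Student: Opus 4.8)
The plan is to identify the output SNR as the ratio of the recovered signal power to the residual noise power, where the signal is the target source system $m=1$ and the noise is the leakage of the residual system $m=2$, and then to reconstruct this ratio from the wave function energies of the input $\rho_{in}$ of \eqref{ZEqnNum367280}, the register state $\sigma_{QR}$ of \eqref{ZEqnNum669290}, and the output $\sigma_{out}$ of \eqref{18)}.

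First I would assign to each state its wave function energy. For a component $X^{\left(m,t\right)}=\sqrt{\lambda_i^{\left(m,t\right)}}\,|\varphi_i^{\left(m,t\right)}\rangle$ as in \eqref{ZEqnNum718067}, the natural energy is $E\bigl(X^{\left(m,t\right)}\bigr)=\mathrm{Tr}\bigl(X^{\left(m,t\right)}(X^{\left(m,t\right)})^{\dagger}\bigr)$, and the total energy of a system is the sum of its component energies. Using the partition of the bases into the two source clusters established in Theorem~2, cf.\ \eqref{144)}, the signal energy is the contribution of the $K_1$ target bases and the noise energy is the contribution of the remaining $K_2=K-K_1$ bases.

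Second, I would bring in the extraction result of Theorem~3. By \eqref{166)} the probability is concentrated at the target peaks $j=K/k_1$ with value $K_1^2/K^2$, while by \eqref{167)} the amplitude vanishes for every $j\neq K/k_1$; this fixes the signal energy of the output in terms of $K_1^2/K^2$ and shows that the $m=2$ contribution is suppressed. I would then define, for each of the three states, an energy ratio equal to the fraction of signal energy in the total energy: unity for the (pure) input, the raw mixing level for the register, and the post-extraction concentration for the output.

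Third, I would combine these three ratios and show that $\mathrm{SNR}(\sigma_{out})$ in \eqref{19)} equals the difference between the register-level ratio and the output-level ratio taken relative to the input reference, so that the difference exactly isolates the concentration gain delivered by $U_{ML}$ together with the subsequent $\tilde{U}_{\mathrm{DSTFT}}^{\dagger}$ and $U_{DFT}$ transforms. The hard part will be fixing a consistent normalization for the wave function energy across the three states: because the extraction in Theorem~3 is probabilistic and the post-DSTFT amplitudes in \eqref{ZEqnNum249034} are not automatically normalized, the peak weight $K_1^2/K^2$ must be reconciled with the total energy bookkeeping so that the difference of ratios returns precisely the SNR rather than a quantity off by an overall mass factor.
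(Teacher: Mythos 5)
There is a genuine gap, and it lies in your choice of energy functional. The paper's proof works with a verification oracle ${\rm {\mathcal O}}_{V} $ that computes the energy of a wavefunction as the Rayleigh quotient $E\left(\psi \right)={\int \left\langle \psi \right|\hat{H}\left|\psi \right\rangle }/{\int \left\langle \psi | \psi \right\rangle }$ with respect to a Hamiltonian $\hat{H}$, then sets $S=E\left(\psi _{in} \right)$, $X=E\left(\varphi \right)$, $T=E\left(\Phi ^{*} \right)$ for the three whole wavefunctions (input, register output ${\left| \varphi \right\rangle} =U_{QR} {\left| \psi _{in} \right\rangle} $, memory output ${\left| \Phi ^{*} \right\rangle} $), \emph{defines} ${\rm SNR}\left({\left| \Phi ^{*} \right\rangle} \right)=10\log _{10} R\left(S,T\right)$ and ${\rm SNR}\left({\left| X \right\rangle} \right)=10\log _{10} R\left(S,X\right)$, and connects these to $\Delta =R\left(S,T\right)-R\left(S,X\right)$ of \eqref{ZEqnNum172946} via $\Delta =10^{{\Delta _{{\rm SNR}} }/{10}} $, i.e., the difference of energy ratios equals the \emph{exponential} of an SNR difference. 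Your trace-based energy $E\left(X^{\left(m,t\right)} \right)={\rm Tr}\left(X^{\left(m,t\right)} \left(X^{\left(m,t\right)} \right)^{\dag } \right)$ is simply the probability weight $\lambda _{i}^{\left(m,t\right)} $, and total probability mass is invariant under every unitary in the pipeline: for a normalized input you get $X=S$ identically, the register-level ratio is pinned to unity, and the difference of ratios degenerates to a combinatorial number carrying no information about the noise injected by $U_{QR} $. The Hamiltonian expectation is what makes the construction non-trivial ($U_{QR} $ need not commute with $\hat{H}$, so $X\ne S$ in general), and its built-in division by $\int \left\langle \psi | \psi \right\rangle $ is what dissolves the normalization worry you correctly flag at the end of your plan --- under the paper's functional that ``hard part'' never arises, while under yours it is fatal rather than merely hard.

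Two further mismatches. First, the paper's proof of this theorem makes no use of the basis clusters $K_{1} ,K_{2} $ from Theorem~2 or of the peak probabilities $K_{1}^{2} /K^{2} $ from Theorem~3; it treats ${\left| \Phi ^{*} \right\rangle} $ as an opaque output state and needs only the three oracle energies, so your cluster-by-cluster signal/noise bookkeeping is machinery the statement does not require (and, as you yourself note, the unnormalized amplitudes in \eqref{ZEqnNum249034} would make that bookkeeping delicate to close). Second, your concluding claim that the SNR of \eqref{19)} \emph{equals} a difference of energy ratios has the wrong structure: the SNR here is a decibel quantity, and the relation the paper actually establishes places the ratio difference $\Delta $ on the linear side of a $10\log _{10} $ correspondence, related to the SNR difference through an exponential, never equal to the SNR itself. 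To repair your proposal you would need to (i) replace the trace functional by a Hamiltonian expectation value that the register evolution can actually change, and (ii) insert the logarithmic conversion between energy-ratio space and SNR space before asserting any equality.
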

\begin{proof}
Let ${\left| \psi _{in}  \right\rangle} $ be an arbitrary quantum system fed into the input of an HRE quantum memory unit,
\begin{equation} \label{171)} 
{\left| \psi _{in}  \right\rangle} =\sum _{i}a_{i} {\left| i \right\rangle}  ,   
\end{equation} 
and let ${\left| \varphi  \right\rangle} $ be the state outputted from the $QR$ quantum register, 
\begin{equation} \label{172)} 
{\left| \varphi  \right\rangle} =U_{QR} {\left| \psi _{in}  \right\rangle} ,    
\end{equation} 
where $U_{QG} $ is an unknown transformation.

Let ${\left| \Phi ^{*}  \right\rangle} $ be the output system of as given in \eqref{ZEqnNum896340}, that can be rewritten as
\begin{equation} \label{173)} 
{\left| \Phi ^{*}  \right\rangle} =U{\left| \varphi  \right\rangle} =U\left(U_{QR} {\left| \psi _{in}  \right\rangle} \right),  
\end{equation} 
where $U$ is the operator of the integrated unitary operations of the HRE quantum memory, defined as
\begin{equation} \label{174)} 
U=U_{ML} \tilde{U}_{{\rm DSTFT}}^{\dag } U_{DFT} =U_{F} U_{CQT} U_{P} U_{CQT}^{\dag } \tilde{U}_{{\rm DSTFT}}^{\dag } U_{DFT} .  
\end{equation} 
Then, let ${\rm {\mathcal O}}_{V} $ be a verification oracle that computes the energy $E$ of a wavefunction ${\left| \psi  \right\rangle} =\sum _{i}c_{i} {\left| \phi _{i}  \right\rangle} $ \cite{snr} as
\begin{equation} \label{175)} 
E\left( \psi  \right)=\frac{\int\limits{\left\langle  \psi  \right|\hat{H}\left| \psi  \right\rangle }}{\int\limits{\left\langle  \psi  | \psi  \right\rangle }}=\frac{\sum\nolimits_{ij}{c_{i}^{*}{{c}_{j}}\int\limits{\left\langle  {{\phi }_{i}} \right|\hat{H}\left| {{\phi }_{j}} \right\rangle }}}{\sum\nolimits_{ij}{c_{i}^{*}{{c}_{j}}\int\limits{\left\langle  {{\phi }_{i}} | {{\phi }_{j}} \right\rangle }}},
\end{equation} 
where $\hat{H}$ is a Hamiltonian. 

Then, let evaluate the corresponding energies of wavefunctions ${\left| \psi _{in}  \right\rangle} $, ${\left| \varphi  \right\rangle} $ and ${\left| \Phi ^{*}  \right\rangle} $ via ${\rm {\mathcal O}}_{V} $, as
\begin{equation} \label{ZEqnNum611323} 
S=E\left(\psi _{in} \right),  
\end{equation} 
\begin{equation} \label{177)} 
X=E\left(\varphi \right),  
\end{equation} 
and
\begin{equation} \label{ZEqnNum434181} 
T=E\left(\Phi ^{*} \right).  
\end{equation} 
Then, let $\Delta $ be the difference of the ratios of wavefunction energies, defined as
\begin{equation} \label{ZEqnNum172946} 
\Delta =R\left(S,T\right)-R\left(S,X\right) 
\end{equation} 
where
\begin{equation} \label{180)} 
R\left(S,T\right)={\textstyle\frac{S}{T}} ,  
\end{equation} 
and
\begin{equation} \label{181)} 
R\left(S,X\right)={\textstyle\frac{S}{X}} .  
\end{equation} 
From the quantities of \eqref{ZEqnNum611323}-\eqref{ZEqnNum434181}, let ${\rm SNR}\left({\left| \Phi ^{*}  \right\rangle} \right)$ be the SNR of the output system ${\left| \Phi ^{*}  \right\rangle} $, defined as
\begin{equation}\label{182)}
\begin{split}
   \text{SNR}\left( \left| {{\Phi }^{*}} \right\rangle  \right)&=10{{\log }_{10}}R\left( S,T \right) \\ 
 & ={{\log }_{10}}\Delta +\tfrac{1}{10}\text{SNR}\left( \left| X \right\rangle  \right),  
\end{split}
\end{equation} 
where 
\begin{equation} \label{183)} 
{\rm SNR}\left({\left| X \right\rangle} \right)=10\log _{10} R\left(S,X\right),  
\end{equation} 
while $\Delta $ is as given in \eqref{ZEqnNum172946}.

Therefore, the SNR of the output system can be evolved from the difference of the ratios of the wavefunction energies as
\begin{equation} \label{184)}
\begin{split}
   \text{SNR}\left( \left| {{\Phi }^{*}} \right\rangle  \right)&=10{{\log }_{10}}R\left( S,T \right) \\ 
 & =10\left( {{\log }_{10}}\Delta +{{\log }_{10}}R\left( S,X \right) \right) \\ 
 & =10\left( {{\log }_{10}}\left( R\left( S,T \right)-R\left( S,X \right) \right)+{{\log }_{10}}R\left( S,X \right) \right) \\ 
 & =10\left( {{\log }_{10}}\tfrac{R\left( S,T \right)}{R\left( S,X \right)}+{{\log }_{10}}R\left( S,X \right) \right).  
\end{split}
\end{equation}
It also can be verified that $\Delta $ from \eqref{ZEqnNum172946} can be rewritten as
\begin{equation} \label{185)} 
\Delta =10^{{\Delta _{{\rm SNR}} \mathord{\left/ {\vphantom {\Delta _{{\rm SNR}}  10}} \right. \kern-\nulldelimiterspace} 10} } ,   
\end{equation} 
where $\Delta _{{\rm SNR}} $ is an SNR difference, defined as
\begin{equation} \label{186)} 
\Delta _{{\rm SNR}} ={\rm SNR}\left({\left| \Phi ^{*}  \right\rangle} \right)-{\rm SNR}\left({\left| X \right\rangle} \right).   
\end{equation} 
The high SNR values are reachable at moderate values of wavefunction energy ratio differences \eqref{ZEqnNum172946}, therefore a high retrieval efficiency (high SNR values) can be produced by the local unitaries of the memory unit (see also \fref{fig6}). 

The proof is concluded here.
\end{proof}

The verification of the retrieval efficiency of the output of an HRE quantum memory unit is depicted in \fref{fig5}.

 \begin{center}
\begin{figure*}[!htbp]
\begin{center}
\includegraphics[angle = 0,width=1\linewidth]{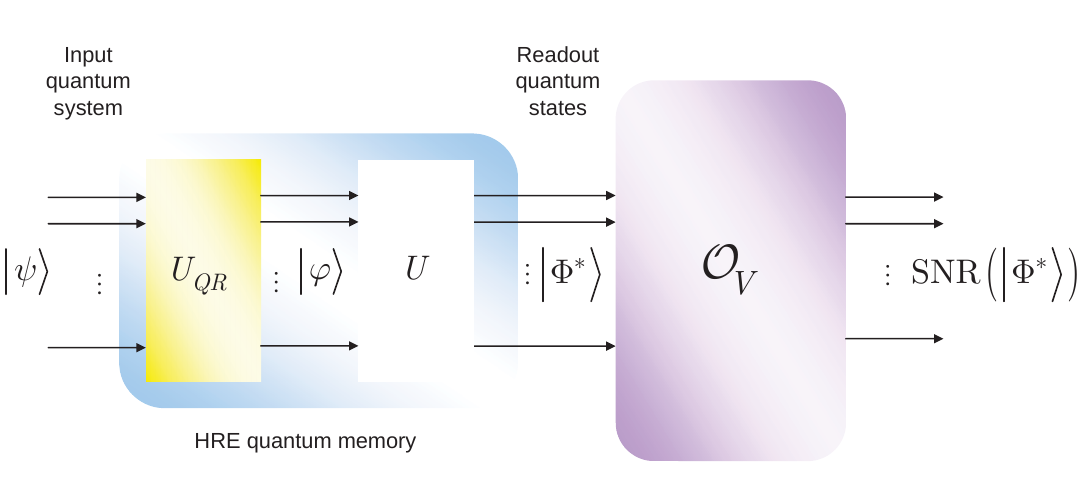}
\caption{Verification of the retrieval efficiency of an HRE quantum memory unit via an ${\rm {\mathcal O}}_{V} $ verification oracle. In the verification procedure, an unknown quantum system ${\left| \psi  \right\rangle} $ is stored in the $QR$ quantum register that is evolved by an unknown operation $U_{QR} $ of the $QR$ quantum register. The output of $QR$ is an unknown quantum system ${\left| \varphi  \right\rangle} $ that is processed further by the $U$ integrated unitary operations of the HRE quantum memory. The output system of the HRE quantum memory is ${\left| \Phi ^{*}  \right\rangle} $ \eqref{ZEqnNum896340}. The ${\rm {\mathcal O}}_{V} $ oracle evaluates the ${\rm SNR}$ of the readout quantum system ${\left| \Phi ^{*}  \right\rangle} $.} 
 \label{fig5}
 \end{center}
\end{figure*}
\end{center}

The output SNR values in the function of the $\Delta $ wave function energy ratio difference are depicted in \fref{fig6}. 

 \begin{center}
\begin{figure*}[!htbp]
\begin{center}
\includegraphics[angle = 0,width=1\linewidth]{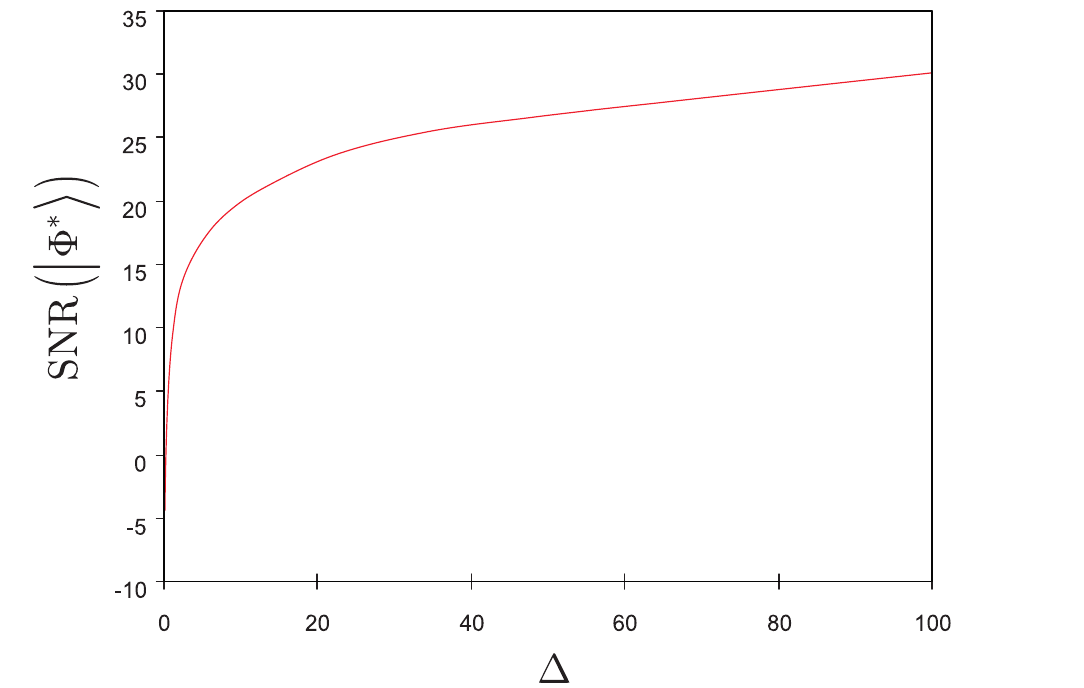}
\caption{The output SNR values, ${\rm SNR}\left({\left| \Phi ^{*}  \right\rangle} \right)=10\log _{10} R\left(S,T\right)$, of an HRE quantum memory in the function of $\Delta =R\left(S,T\right)-R\left(S,X\right)$, where $R\left(S,T\right)={\textstyle\frac{S}{T}} $, $R\left(S,X\right)={\textstyle\frac{S}{X}} $, $S=E\left(\psi _{in} \right)$, $X=E\left(\varphi \right)$, and $T=E\left(\Phi ^{*} \right)$.} 
 \label{fig6}
 \end{center}
\end{figure*}
\end{center}

\section{Conclusions}
\label{sec5}
Quantum memories are a cornerstone of the construction of quantum computers and a high-performance global-scale quantum Internet. Here, we defined the HRE quantum memory for near-term quantum devices. We defined the unitary operations of an HRE quantum memory and proved the learning procedure. We showed that the local unitaries of an HRE quantum memory integrates a group of quantum machine learning operations for the evaluation of the unknown quantum system, and a group of unitaries for the target system recovery. We determined the achievable output SNR values. The HRE quantum memory is a particularly convenient unit for gate-model quantum computers and the quantum Internet.

\section*{Acknowledgements}
The research reported in this paper has been supported by the Hungarian Academy of Sciences (MTA Premium Postdoctoral Research Program 2019), by the National Research, Development and Innovation Fund (TUDFO/51757/2019-ITM, Thematic Excellence Program), by the National Research Development and Innovation Office of Hungary (Project No. 2017-1.2.1-NKP-2017-00001), by the Hungarian Scientific Research Fund - OTKA K-112125 and in part by the BME Artificial Intelligence FIKP grant of EMMI (Budapest University of Technology, BME FIKP-MI/SC).


\newpage
\appendix
\setcounter{table}{0}
\setcounter{figure}{0}
\setcounter{equation}{0}
\setcounter{algocf}{0}
\renewcommand{\thetable}{\Alph{section}.\arabic{table}}
\renewcommand{\thefigure}{\Alph{section}.\arabic{figure}}
\renewcommand{\theequation}{\Alph{section}.\arabic{equation}}
\renewcommand{\thealgocf}{\Alph{section}.\arabic{algocf}}

\setlength{\arrayrulewidth}{0.1mm}
\setlength{\tabcolsep}{5pt}
\renewcommand{\arraystretch}{1.5}

\section{Appendix}
\subsection{Abbreviations}
\begin{description}
\item[DFT] Discrete Fourier Transform
\item[DSTFT] Discrete Short-Time Fourier Transform
\item[HRE] High-Retrieval Efficiency
\item[SNR] Signal-to-Noise Ratio
\end{description}

\subsection{Notations}
The notations of the manuscript are summarized in \tref{tab2}.
\begin{center}
\begin{longtable}{||l|p{4.5in}||}
\caption{Summary of notations.}
\label{tab2}
\endfirsthead
\endhead
\hline
\textit{Notation} & \textit{Description} \\ \hline
$\rho _{in} $ & An unknown input quantum system formulated by $n$ unknown density matrices. \\ \hline 
$\rho _{i} $ & An $i$-th density matrix, $i=1,\ldots ,n$.  \\ \hline 
$QR$ & Quantum register of an HRE quantum memory. \\ \hline 
$\sigma _{QR} $ & Mixed state of the $QR$ quantum register. \\ \hline 
$T$ & Total evolution time, $t=1,\ldots ,T$.  \\ \hline 
$\sigma _{QR}^{\left(t\right)} $ & Mixed state of the $QR$ quantum register at a given $t$, $\sigma _{QR}^{\left(t\right)} \in {\mathbb{C}}$, $\sigma _{QR}^{\left(t\right)} =\sum _{i=1}^{n}\lambda _{i}^{\left(t\right)} {\left| \varphi _{i}^{\left(t\right)}  \right\rangle} {\left\langle \varphi _{i}^{\left(t\right)}  \right|}  $, $t=1,\ldots ,T$. \\ \hline 
$U_{QR}^{\left(t\right)} $ & Unknown evolution matrix of the $QR$ quantum register at a given $t$. \\ \hline 
$\dim \left(U_{QR}^{\left(t\right)} \right)$ & Dimension of $U_{QR}^{\left(t\right)} $, $\dim \left(U_{QR}^{\left(t\right)} \right)=d^{n} \times d^{n} $, where $d$ is the dimension of the quantum system. \\ \hline 
$X_{i}^{\left(t\right)} $ & A complex quantity, defined as $X_{i}^{\left(t\right)} =\sqrt{\lambda _{i}^{\left(t\right)} } {\left| \varphi _{i}^{\left(t\right)}  \right\rangle} $, $i=1,\ldots ,n$, $t=1,\ldots ,T$. \\ \hline 
$X^{\left(t\right)} $ & Sum of $n$ complex quantities, $X^{\left(t\right)} =\sum _{i=1}^{n}X_{i}^{\left(t\right)}  $. \\ \hline 
$\zeta _{QR}^{\left(t\right)} $ & An unknown residual density matrix at a given $t$, it formulates the mixed system of the quantum register as $\sigma _{QR}^{\left(t\right)} =\rho _{in} +\zeta _{QR}^{\left(t\right)} .$ \\ \hline 
$M$  & Number of source systems of the mixed quantum register, $\sigma _{QR}^{\left(t\right)} =\sum _{m=1}^{M}\rho _{m}  $, where $\rho _{m} $ is an $m$-th source system, $m=1,\ldots ,M$. \\ \hline 
$\rho _{m} $ & An $m$-th source density matrix of the mixed quantum register state, $m=1,\ldots ,M$. \\ \hline 
$X_{i}^{\left(m,t\right)} $ & A complex quantity associated with an $m$-th source system, $X_{i}^{\left(m,t\right)} =\sqrt{\lambda _{i}^{\left(m,t\right)} } {\left| \varphi _{i}^{\left(m,t\right)}  \right\rangle} ,$ $m=1,\ldots ,M$, $i=1,\ldots ,n$, $t=1,\ldots ,T$. \\ \hline 
$X^{\left(m,t\right)} $ & Sum of $n$ complex quantities, $X^{\left(m,t\right)} =\sum _{i=1}^{n}X_{i}^{\left(m,t\right)}  $, $m=1,\ldots ,M$, $t=1,\ldots ,T$. \\ \hline 
$\tilde{X}^{\left(m,t\right)} $ & An approximation of $X^{\left(m,t\right)} $. \\ \hline 
$U_{QR} $ & Unknown transformation matrix of the $QR$ quantum register over the total evolution time $T$. \\ \hline 
$V_{QG} $ & Inverse matrix of the unknown $U_{QR} $. \\ \hline 
$\sigma _{out} $ & Output quantum system. \\ \hline 
$U_{ML} $ & Unitary of a quantum machine learning procedure. \\ \hline 
$U_{F} $ & Factorization unitary, evaluates $K$ bases for the source system decomposition, and defines a $W$ auxiliary quantum system. \\ \hline 
$U_{CQT} $ & Unitary of the quantum constant $Q$ transform. The $U_{CQT} $ transform is a preliminary operation for the partitioning of the $K$ bases onto $M$ clusters via unitary $U_{P} $. \\ \hline 
$f_{W} $ & A windowing function in $U_{CQT} $. \\ \hline 
$U_{P} $ & A basis partitioning unitary that clusters the bases with respect to the $M$ source systems. \\ \hline 
$U_{CQT}^{\dag } $ & A unitary, inverse of $U_{CQT} $. \\ \hline 
$\tilde{U}_{{\rm DSTFT}}^{\dag } $ & Unitary of the inverse quantum DSTFT (discrete short-time Fourier transform) operation. \\ \hline 
$U_{DFT} $ & Quantum discrete Fourier transform. \\ \hline 
$u_{mk} $ & Parameter in the basis estimation procedure of $U_{ML} $, $m=1,\ldots ,M$, $k=1,\ldots ,K$. \\ \hline 
$H_{mk} $ & A Hamiltonian, $H_{mk} =G_{mk} {\left| k_{m}  \right\rangle} {\left\langle k_{m}  \right|} $, where $G_{mk} $ is the eigenvalue of basis ${\left| k_{m}  \right\rangle} $, $H_{mk} {\left| k_{m}  \right\rangle} =G_{mk} {\left| k_{m}  \right\rangle} $. \\ \hline 
$\tau $ & Application time, sub-parameter of $u_{mk} $. \\ \hline 
$w_{kt} $ & A system state evolved via $U_{ML} $. \\ \hline 
$K$  & Number of bases evolved via $U_{ML} $. \\ \hline 
$T$ & Total evolution time of the quantum system in the quantum register. \\ \hline 
$M$ & Number of source systems of the mixed quantum system $\sigma _{QR} $ of the quantum register. \\ \hline 
$U_{B} $ & A complex basis matrix, $U_{B} =\left\{u_{mk} \right\}\in {\mathbb{C}}^{M\times K} $, $m=1,\ldots ,M$, $k=1,\ldots ,K$. \\ \hline 
$\vec{\rho }_{W} $ & A complex matrix, $\vec{\rho }_{W} \in {\mathbb{C}}^{K\times T} $, $\vec{\rho }_{W} =\left\{\rho _{W}^{\left(t\right)} \right\}_{t=1}^{T} $. \\ \hline 
$W_{k}^{\left(t\right)} $ & A complex quantity, $W_{k}^{\left(t\right)} \in {\mathbb{C}}$, $W_{k}^{\left(t\right)} =\sqrt{v_{k}^{\left(t\right)} } {\left| \phi _{k}  \right\rangle} $, $m=1,\ldots ,M$, $k=1,\ldots ,K$. \\ \hline 
$w_{kt} $ & A complex quantity, $w_{kt} =W_{k}^{\left(t\right)} $, $m=1,\ldots ,M$, $k=1,\ldots ,K$. \\ \hline 
$W$ & A complex matrix, $W=\left\{W_{k}^{\left(t\right)} =w_{kt} \right\}\in {\mathbb{C}}^{K\times T} $. \\ \hline 
$\vec{X}$ & A complex matrix, $\vec{X}\in {\mathbb{C}}^{M\times T} $. \\ \hline 
$\tilde{X}$ & A complex matrix, $\tilde{X}\in {\mathbb{C}}^{M\times T} $, an approximation of $\vec{X}$, as $\tilde{X}=U_{B} W$. \\ \hline 
$K_{m} $ & The number of bases associated with the $m$-th source system, $m=1,\ldots ,M$, $\sum _{m}K_{m}  =K$. \\ \hline 
${\left| \Phi ^{*}  \right\rangle} $ & Target output system state, ${\left| \Phi ^{*}  \right\rangle} ={\textstyle\frac{1}{\sqrt{K_{1} } }} \sum _{k_{1} =1}^{K_{1} }{\left| k_{1}  \right\rangle}  $, where $K_{1} $ is the number of bases for the source system $m=1$, $k_{1} =1,\ldots ,K_{1} $. \\ \hline 
$\rho _{\vec{X}} $ & A density matrix associated with $\vec{X}$, $\rho _{\vec{X}} =\sum _{m=1}^{M}\sum _{t=1}^{T}\vec{X}^{\left(m,t\right)} \left(\vec{X}^{\left(m,t\right)} \right)^{\dag }   $. \\ \hline 
$\rho _{\tilde{X}} $ & A density matrix associated with $\tilde{X}$, $\rho _{\tilde{X}} =\sum _{m=1}^{M}\sum _{t=1}^{T}\tilde{X}^{\left(m,t\right)} \left(\tilde{X}^{\left(m,t\right)} \right)^{\dag }   $. \\ \hline 
$D\left(\left. \cdot \right\| \cdot \right)$ & Quantum relative entropy function. \\ \hline 
$f\left(U_{F} \right)$ & Objective function of unitary $U_{F} $. \\ \hline 
${\rm {\mathcal L}}\left(\cdot \right)$ & A likelihood function. \\ \hline 
$\alpha _{mk} $ & A control parameter defined for $u_{mk} $, such that $u_{mk} \simeq \alpha _{mk} e^{-\alpha _{mk} u_{mk} } $. \\ \hline 
$\beta _{kt} $ & A control parameter defined for $w_{kt} $, such that $w_{kt} \simeq \beta _{kt} e^{-\beta _{kt} w_{kt} } $. \\ \hline 
$\zeta $ & A set of model parameters, $\zeta =\left\{U_{B} ,W\right\}$. \\ \hline 
$\tau _{mk}^{\left(t\right)} $ & A set of control parameters, $\tau _{mk}^{\left(t\right)} =\left\{\alpha _{mk} ,\beta _{kt} \right\}$. \\ \hline 
$\tilde{\zeta }$ & A maximum likelihood estimation of $\zeta $. \\ \hline 
${\rm {\mathcal D}}\left(\cdot \right)$ & A probability distribution. \\ \hline 
$\kappa _{mkt} $ & An estimation coefficient, $\kappa _{mkt} =u_{mk} w_{kt} $. \\ \hline 
$\vec{\kappa }$ & A complex matrix, $\vec{\kappa }\in {\mathbb{C}}^{M\times T} $, $\vec{\kappa }=\left\{\kappa ^{\left(1,t\right)} ,\ldots ,\kappa ^{\left(M,t\right)} \right\}_{t=1}^{T} $, where $\kappa ^{\left(m,t\right)} =\left(\kappa _{k=1}^{\left(m,t\right)} ,\ldots ,\kappa _{k=K}^{\left(m,t\right)} \right)^{T} ,$ with $\kappa _{k}^{\left(m,t\right)} =\kappa _{mkt} $. \\ \hline 
${\rm {\mathcal D}}_{v} \left(\cdot \right)$ & A variational distribution. \\ \hline 
$H\left({\rm {\mathcal D}}_{v} \left(\cdot \right)\right)$ & Entropy of a variational distribution ${\rm {\mathcal D}}_{v} \left(\cdot \right)$. \\ \hline 
${\rm {\mathcal L}}_{{\rm {\mathcal D}}_{v} } $ & A likelihood function. \\ \hline 
${\mathbb{E}}_{{\rm {\mathcal D}}_{v} \left(i\ne \Phi \right)} \left(\cdot \right)$ & Expectation function of the ${\rm {\mathcal D}}_{v} \left(i\right)$ variational distribution of $i$, such that $i\ne \Phi $, $\Phi \in \left\{\vec{\kappa },U_{B} ,W\right\}$,\newline ${\mathbb{E}}_{a} \left(f\left(a\right)+g\left(a\right)\right)={\mathbb{E}}_{a} \left(f\left(a\right)\right)+{\mathbb{E}}_{a} \left(g\left(a\right)\right)$,\newline for some functions $f\left(a\right)$ and $g\left(a\right)$, and\newline ${\mathbb{E}}_{a} \left(bf\left(a\right)\right)=b{\mathbb{E}}_{a} \left(f\left(a\right)\right)$\newline for some constant $b$. \\ \hline 
$f_{\delta } \left(\cdot \right)$ & Dirac delta function. \\ \hline 
$f_{\Gamma } \left(\cdot \right)$ & Gamma function,\newline $f_{\Gamma } \left(x\right)=\int\limits_{0}^{\infty }t^{x-1} e^{-t} dt $.  \\ \hline 
${\rm {\mathcal M}}$ & A multinomial distribution. \\ \hline 
$\eta _{mkt} $ & A multinomial parameter. \\ \hline 
$\eta _{k}^{\left(m,t\right)} $ & A multinomial parameter vector, $\eta _{k}^{\left(m,t\right)} =\left(\eta _{k=1}^{\left(m,t\right)} ,\ldots ,\eta _{k=K}^{\left(m,t\right)} \right)^{T} $, \newline $\sum _{k=1}^{K}\eta _{k}^{\left(m,t\right)}  =1$. \\ \hline 
${\rm {\mathcal G}}\left(\cdot \right)$ & A Gamma distribution, \newline ${\rm {\mathcal G}}\left(x;a,b\right)=e^{\left(a-1\right)\log x-\frac{x}{b} -\log f_{\Gamma } \left(a\right)-a\log b} ,$\newline where $a$ is a shape parameter, while $b$ is a scale parameter. \\ \hline 
$f_{\Gamma } \left(\cdot \right)$ & Gamma function. \\ \hline 
$H\left({\rm {\mathcal G}}\left(\cdot \right)\right)$ & Entropy of Gamma distribution ${\rm {\mathcal G}}\left(\cdot \right)$. \\ \hline 
$\partial _{{\rm {\mathcal G}}_{\log } } \left(\cdot \right)$ & Derivative of the log gamma function (digamma function),  $\partial _{{\rm {\mathcal G}}_{\log } } \left(x\right)={\textstyle\frac{d\log f_{\Gamma } \left(x\right)}{dx}} $. \\ \hline 
${\mathbb{E}}\left(\kappa _{mkt} \right)$ & Expected value of $\kappa _{mkt} $, ${\mathbb{E}}\left(\kappa _{mkt} \right)=X^{\left(m,t\right)} \eta _{mkt} $. \\ \hline 
$\tilde{\alpha }_{mk} \left(A\right)$ & Control parameter for $U_{B} $, $\tilde{\alpha }_{mk} \left(A\right)=1+\sum _{t=1}^{T}{\mathbb{E}}\left(\kappa _{mkt} \right). $ \\ \hline 
$\tilde{\alpha }_{mk} \left(B\right)$ & Control parameter for $U_{B} $, $\tilde{\alpha }_{mk} \left(B\right)={\textstyle\frac{1}{\sum _{t=1}^{T}{\mathbb{E}}\left(w_{kt} \right)+\alpha _{mk}  }} .$ \\ \hline 
$\tilde{\beta }_{kt} \left(A\right)$ & Control parameter for $W$, $\tilde{\beta }_{kt} \left(A\right)=1+\sum _{m=1}^{M}{\mathbb{E}}\left(\kappa _{mkt} \right). $ \\ \hline 
$\tilde{\beta }_{kt} \left(B\right)$ & Control parameter for $W$, $\tilde{\beta }_{kt} \left(B\right)={\textstyle\frac{1}{\sum _{m=1}^{M}{\mathbb{E}}\left(u_{mk} \right)+\beta _{kt}  }} .$ \\ \hline 
${\mathbb{E}}\left(w_{kt} \right)$ & Expected value of ${\mathbb{E}}\left(w_{kt} \right)$, ${\mathbb{E}}\left(w_{kt} \right)=\tilde{\beta }_{kt} \left(A\right)\tilde{\beta }_{kt} \left(B\right)$. \\ \hline 
${\mathbb{E}}\left(\log w_{kt} \right)$ & Expected value of ${\mathbb{E}}\left(\log w_{kt} \right)$, ${\mathbb{E}}\left(\log w_{kt} \right)=\partial _{{\rm {\mathcal G}}_{\log } } \left(\tilde{\beta }_{kt} \left(A\right)\right)+\log \tilde{\beta }_{kt} \left(B\right)$. \\ \hline 
${\mathbb{E}}\left(u_{mk} \right)$ & Expected value of ${\mathbb{E}}\left(u_{mk} \right)$, ${\mathbb{E}}\left(u_{mk} \right)=\tilde{\alpha }_{mk} \left(A\right)\tilde{\alpha }_{mk} \left(B\right)$. \\ \hline 
${\mathbb{E}}\left(\log u_{mk} \right)$ & Expected value of ${\mathbb{E}}\left(\log u_{mk} \right)$, ${\mathbb{E}}\left(\log u_{mk} \right)=\partial _{{\rm {\mathcal G}}_{\log } } \left(\tilde{\alpha }_{mk} \left(A\right)\right)+\log \tilde{\alpha }_{mk} \left(B\right)$. \\ \hline 
$E_{mk} $ & A basis estimation, $E_{mk} \approx \alpha _{mk} $. \\ \hline 
$F_{kt} $ & A system state estimation, $F_{kt} \approx \beta _{kt} $. \\ \hline 
$\tilde{\tau }_{k}^{\left(t\right)} $ & Estimation of the control parameters $\alpha _{mk} ,\beta _{kt} $, as $\tilde{\tau }_{mk}^{\left(t\right)} =\left\{E_{mk} ,F_{kt} \right\}$. \\ \hline 
$K^{*} $ & An optimal number of bases, $K^{*} =\arg \mathop{\max }\limits_{K} {\rm {\mathcal L}}_{{\rm {\mathcal D}}_{v} } \left(K\right)$, where ${\rm {\mathcal L}}_{{\rm {\mathcal D}}_{v} } \left(K\right)$ is the likelihood function ${\rm {\mathcal L}}_{{\rm {\mathcal D}}_{v} } $ at a particular base number $K$. \\ \hline 
$Q$ & Parameter of $U_{CQT} $. \\ \hline 
$f_{W} \left(\cdot \right)$ & Windowing function for $U_{CQT} $. \\ \hline 
$h$  & Parameter of $f_{W} \left(\cdot \right)$. \\ \hline 
$C_{B} $ & A complex matrix of transformed bases, $C_{B} =U_{CQT} \left(U_{B} \right)=\left\{C_{mk} \right\}\in {\mathbb{C}}^{M\times K} .$ \\ \hline 
$C_{mk} $ & A $Q$-transformed basis estimation parameter, $C_{mk} =U_{CQT} \left(E_{mk} \right)$. \\ \hline 
$S$ & A complex matrix, $S=C_{B} W$, $S\in {\mathbb{C}}^{M\times T} $, $C_{B} W=\left(U_{CQT} U_{B} \right)W.$ \\ \hline 
${\rm {\mathcal T}}$ & A tensor (multidimensional array). \\ \hline 
$\dim \left({\rm {\mathcal T}}\right)$ & Dimension of tensor ${\rm {\mathcal T}}$. \\ \hline 
${\rm {\mathcal R}}$ & A translation tensor. \\ \hline 
$f\left(U_{P} \right)$ & A cost function of $U_{P} $. \\ \hline 
$\Omega $ & A sum of partitioned bases. \\ \hline 
$\Omega _{Q}^{\left(m\right)} $ & A cluster of $\left|\Omega _{Q}^{\left(m\right)} \right|$ $Q$-transformed bases for the $m$-th system state, $\Omega _{Q}^{\left(m\right)} =\left\{\Omega _{Q}^{\left(m,k_{m} \right)} \right\}_{k_{m} =1}^{K_{m} } $,  $m=1,\ldots ,M$. \\ \hline 
$\left|\Omega _{Q}^{\left(m\right)} \right|$ & Cardinality of cluster $\Omega _{Q}^{\left(m\right)} $, $\left|\Omega _{Q}^{\left(m\right)} \right|=K_{m} $. \\ \hline 
$\theta $ & Partitioned bases transformed by $U_{CQT}^{\dag } $. \\ \hline 
$\gamma ^{\left(m\right)} $ & A cluster of $K_{m} $ bases for $m$-th system state. \\ \hline 
$\chi W$ & A system state, defined as $\chi W=U_{CQT}^{\dag } \left(U_{P} \left(C_{B} W\right)\right)=U_{CQT}^{\dag } \left(U_{P} U_{CQT} U_{B} \right)$. \\ \hline 
${\left| k_{m}  \right\rangle} $ & A basis state associated with the $m$-th source.  \\ \hline 
$x_{m,k_{m} } $ & Model parameter, defined as \newline $x_{m,k_{m} } =\left\{\begin{array}{c} {0,{\rm \; if\; }m=1} \\ {!0,{\rm \; otherwise}} \end{array}\right. .$ \\ \hline 
${\left| \psi _{in}  \right\rangle} $ & An arbitrary input system.  \\ \hline 
${\left| \Phi ^{*}  \right\rangle} $ & An output system, ${\left| \Phi ^{*}  \right\rangle} =U{\left| \varphi  \right\rangle} =U\left(U_{QR} {\left| \psi _{in}  \right\rangle} \right)$, where $U$ is the operator of the integrated unitary operations of the HRE quantum memory, defined as\newline $U=U_{ML} \tilde{U}_{{\rm DSTFT}}^{\dag } U_{DFT} =U_{F} U_{CQT} U_{P} U_{CQT}^{\dag } \tilde{U}_{{\rm DSTFT}}^{\dag } U_{DFT} $. \\ \hline 
${\rm {\mathcal O}}_{V} $ & A verification oracle that computes the energy $E$ of a wavefunction ${\left| \psi  \right\rangle} =\sum _{i}c_{i} {\left| \phi _{i}  \right\rangle}  $. \\ \hline 
$E\left(\psi \right)$ & Energy $E$ of a wavefunction ${\left| \psi  \right\rangle} =\sum _{i}c_{i} {\left| \phi _{i}  \right\rangle}  $. \\ \hline 
$\Delta $ & Wavefunction energy ratio difference, $\Delta =R\left(S,T\right)-R\left(S,X\right)$, where $R\left(S,T\right)={\textstyle\frac{S}{T}} $, $R\left(S,X\right)={\textstyle\frac{S}{X}} $, and $S=E\left(\psi _{in} \right)$, $X=E\left(\varphi \right)$, and $T=E\left(\Phi ^{*} \right)$. \\ \hline 
$\Delta _{{\rm SNR}} $ & An SNR difference. \\ \hline
\end{longtable}
\end{center}

\begin{thebibliography}{10}
\bibitem{puj3} Pirandola, S. and Braunstein, S. L. Unite to build a quantum internet. \textit{Nature} 532, 169–171 (2016).

\bibitem{ref2} Lloyd, S., Shapiro, J. H., Wong, F. N. C., Kumar, P., Shahriar, S. M. and Yuen, H. P. Infrastructure for the quantum Internet. \textit{ACM SIGCOMM} \textit{Computer} \textit{Communication Review}, 34, 9--20 (2004).

\bibitem{puj1} Pirandola, S. End-to-end capacities of a quantum communication network, \textit{Commun. Phys.} 2, 51 (2019).

\bibitem{refn7} Wehner, S., Elkouss, D., and R. Hanson. Quantum internet: A vision for the road ahead, \textit{Science} 362, 6412, (2018).

\bibitem{ref1} Van Meter, R. \textit{Quantum Networking}. ISBN 1118648927, 9781118648926, John Wiley and Sons Ltd (2014).

\bibitem{ref3} Kimble, H. J. The quantum Internet. \textit{Nature}, 453:1023--1030 (2008).

\bibitem{ref4} Gyongyosi, L., Imre, S. and Nguyen, H. V. A Survey on Quantum Channel Capacities, \textit{IEEE Communications Surveys and Tutorials}, DOI: 10.1109/COMST.2017.2786748 (2018).

\bibitem{ref5} Van Meter, R., Ladd, T. D., Munro, W. J. and Nemoto, K. System Design for a Long-Line Quantum Repeater, \textit{IEEE/ACM Transactions on Networking} 17(3), 1002-1013, (2009).

\bibitem{ref6} Van Meter, R., Satoh, T., Ladd, T. D., Munro, W. J. and Nemoto, K. Path selection for quantum repeater networks, \textit{Networking Science}, Volume 3, Issue 1--4, pp 82--95, (2013).

\bibitem{ref7} Van Meter, R. and Devitt, S. J. Local and Distributed Quantum Computation, \textit{IEEE Computer} 49(9), 31-42 (2016).

\bibitem{ref11} Pirandola, S., Laurenza, R., Ottaviani, C. and Banchi, L. Fundamental limits of repeaterless quantum communications, \textit{Nature Communications}, 15043, doi:10.1038/ncomms15043 (2017).

\bibitem{ref13a} Pirandola, S., Braunstein, S. L., Laurenza, R., Ottaviani, C., Cope, T. P. W., Spedalieri, G. and Banchi, L. Theory of channel simulation and bounds for private communication, \textit{Quantum Sci. Technol}. 3, 035009 (2018).

\bibitem{ref13} Pirandola, S. Bounds for multi-end communication over quantum networks, \textit{Quantum Sci. Technol.} 4, 045006 (2019).

\bibitem{ref13b} Pirandola, S. Capacities of repeater-assisted quantum communications, \textit{arXiv:1601.00966} (2016).

\bibitem{ref8} Gyongyosi, L. and Imre, S. Decentralized Base-Graph Routing for the Quantum Internet, \textit{Physical Review A}, American Physical Society, DOI: 10.1103/PhysRevA.98.022310, https://link. Aps. Org/doi/10.1103/PhysRevA.98.022310 (2018).

\bibitem{ref9} Gyongyosi, L. and Imre, S. Dynamic topology resilience for quantum networks, \textit{Proc. SPIE 10547}, Advances in Photonics of Quantum Computing, Memory, and Communication XI, 105470Z; doi: 10.1117/12.2288707 (2018).

\bibitem{ref10} Gyongyosi, L. and Imre, Topology Adaption for the Quantum Internet, \textit{Quantum Information Processing}, Springer Nature, DOI: 10.1007/s11128-018-2064-x, (2018).

\bibitem{add1} Gyongyosi, L. and Imre, S. Entanglement Access Control for the Quantum Internet, \textit{Quantum Information Processing}, Springer Nature, DOI: 10.1007/s11128-019-2226-5, (2019).

\bibitem{add2} Gyongyosi, L. and Imre, S. Opportunistic Entanglement Distribution for the Quantum Internet, \textit{Scientific Reports}, Nature, DOI:10.1038/s41598-019-38495-w, (2019).

\bibitem{add3} Gyongyosi, L. and Imre, S. Adaptive Routing for Quantum Memory Failures in the Quantum Internet, \textit{Quantum Information Processing}, Springer Nature, DOI: 10.1007/s11128-018-2153-x, (2018).

\bibitem{refqirg} Quantum Internet Research Group (QIRG), web: https://datatracker.ietf.org/rg/qirg/about/ (2018).

\bibitem{ref18} Laurenza, R. and Pirandola, S. General bounds for sender-receiver capacities in multipoint quantum communications, \textit{Phys. Rev. A} 96, 032318 (2017).

\bibitem{qnuj1} Gyongyosi, L. and Imre, S. Training Optimization for Gate-Model Quantum Neural Networks, \textit{Scientific Reports}, Nature, DOI: 10.1038/s41598-019-48892-w (2019).

\bibitem{qnuj2} Gyongyosi, L. and Imre, S. Dense Quantum Measurement Theory, \textit{Scientific Reports}, Nature, DOI: 10.1038/s41598-019-43250-2 (2019).

\bibitem{qnuj3} Gyongyosi, L. and Imre, S. State Stabilization for Gate-Model Quantum Computers, \textit{Quantum Information Processing}, Springer Nature, DOI: 10.1007/s11128-019-2397-0, (2019).

\bibitem{qnuj4} Gyongyosi, L. and Imre, S. Quantum Circuit Design for Objective Function Maximization in Gate-Model Quantum Computers, \textit{Quantum Information Processing}, DOI: 10.1007/s11128-019-2326-2 (2019).

\bibitem{ref19} Gyongyosi, L. and Imre, S. Multilayer Optimization for the Quantum Internet, \textit{Scientific Reports}, Nature, DOI:10.1038/s41598-018-30957-x, (2018).

\bibitem{ref20} Gyongyosi, L. and Imre, S. Entanglement Availability Differentiation Service for the Quantum Internet, \textit{Scientific Reports}, Nature, (DOI:10.1038/s41598-018-28801-3), https://www.nature.com/articles/s41598-018-28801-3 (2018).

\bibitem{ref21} Gyongyosi, L. and Imre, S. Entanglement-Gradient Routing for Quantum Networks, \textit{Scientific Reports}, Nature, (DOI:10.1038/s41598-017-14394-w), https://www.nature.com/articles/s41598-017-14394-w (2017).

\bibitem{add4} Gyongyosi, L. and Imre, S. A Survey on Quantum Computing Technology, \textit{Computer Science Review}, Elsevier, DOI: 10.1016/j.cosrev.2018.11.002, ISSN: 1574-0137, (2018).

\bibitem{qkdrev} Gyongyosi, L., Bacsardi, L. and Imre, S. A Survey on Quantum Key Distribution, \textit{Infocom. J} XI, 2, pp. 14-21 (2019).

\bibitem{refn5} Rozpedek, F., Schiet, T., Thinh, L., Elkouss, D., Doherty, A., and S. Wehner, Optimizing practical entanglement distillation, \textit{Phys. Rev. A} 97, 062333 (2018).

\bibitem{refn3} Humphreys, P. et al., Deterministic delivery of remote entanglement on a quantum network, \textit{Nature} 558, (2018).

\bibitem{sat} Liao, S.-K. et al. Satellite-to-ground quantum key distribution, \textit{Nature} 549, pages 43–47, (2017). 

\bibitem{telep} Ren, J.-G. et al. Ground-to-satellite quantum teleportation, \textit{Nature} 549, pages 70–73, (2017).

\bibitem{refn1} Hensen, B. et al., Loophole-free Bell inequality violation using electron spins separated by 1.3 kilometres, \textit{Nature} 526, (2015).

\bibitem{refn2} Hucul, D. et al., Modular entanglement of atomic qubits using photons and phonons, \textit{Nature Physics} 11(1), (2015).

\bibitem{refn4} Noelleke, C. et al, Efficient Teleportation Between Remote Single-Atom Quantum Memories, \textit{Physical Review Letters} 110, 140403, (2013).

\bibitem{refn6} Sangouard, N. et al., Quantum repeaters based on atomic ensembles and linear optics, \textit{Reviews of Modern Physics} 83, 33, (2011).


\bibitem{ref22} Imre, S. and Gyongyosi, L. \textit{Advanced Quantum Communications - An Engineering Approach}. New Jersey, Wiley-IEEE Press (2013).

\bibitem{ref23} Caleffi, M. End-to-End Entanglement Rate: Toward a Quantum Route Metric, 2017 \textit{IEEE Globecom}, DOI: 10.1109/GLOCOMW.2017.8269080, (2018). 

\bibitem{ref24} Caleffi, M. Optimal Routing for Quantum Networks, \textit{IEEE Access}, Vol 5, DOI: 10.1109/ACCESS.2017.2763325 (2017).

\bibitem{ref25} Caleffi, M., Cacciapuoti, A. S. and Bianchi, G. Quantum Internet: from Communication to Distributed Computing, \textit{arXiv:1805.04360} (2018).

\bibitem{ref26} Castelvecchi, D. The quantum internet has arrived, \textit{Nature}, News and Comment, https://www.nature.com/articles/d41586-018-01835-3, (2018).

\bibitem{ref27} Cacciapuoti, A. S., Caleffi, M., Tafuri, F., Cataliotti, F. S., Gherardini, S. and Bianchi, G. Quantum Internet: Networking Challenges in Distributed Quantum Computing, \textit{arXiv:1810.08421} (2018).

\bibitem{ref28} Kok, P., Munro, W. J., Nemoto, K., Ralph, T. C., Dowling, J. P. and Milburn, G. J., Linear optical quantum computing with photonic qubits, \textit{Rev. Mod. Phys}. 79, 135-174 (2007).

\bibitem{qc1} Preskill, J. Quantum Computing in the NISQ era and beyond, \textit{Quantum} 2, 79 (2018).

\bibitem{qc2} Arute, F. et al. Quantum supremacy using a programmable superconducting processor, \textit{Nature}, Vol 574, DOI:10.1038/s41586-019-1666-5 (2019).

\bibitem{qc3} Harrow, A. W. and Montanaro, A. Quantum Computational Supremacy, \textit{Nature}, vol 549, pages 203-209 (2017).

\bibitem{qc4} Aaronson, S. and Chen, L. Complexity-theoretic foundations of quantum supremacy experiments. \textit{Proceedings of the 32nd Computational Complexity Conference}, CCC '17, pages 22:1-22:67, (2017).

\bibitem{qc5} Farhi, E., Goldstone, J., Gutmann, S. and Neven, H. Quantum Algorithms for Fixed Qubit Architectures. \textit{arXiv:1703.06199v1} (2017).

\bibitem{qc6} Farhi, E. and Neven, H. Classification with Quantum Neural Networks on Near Term Processors, \textit{arXiv:1802.06002v1} (2018).

\bibitem{qcadd1} Alexeev, Y. et al. Quantum Computer Systems for Scientific Discovery, \textit{arXiv:1912.07577} (2019).

\bibitem{qcadd2} Loncar, M. et al. Development of Quantum InterConnects for Next-Generation Information Technologies, \textit{arXiv:1912.06642} (2019).

\bibitem{ref29} Petz, D. \textit{Quantum Information Theory and Quantum Statistics}, Springer-Verlag, Heidelberg, Hiv: 6. (2008).

\bibitem{ref30} Bacsardi, L. On the Way to Quantum-Based Satellite Communication, \textit{IEEE Comm. Mag.} 51:(08) pp. 50-55. (2013).

\bibitem{ref31} Biamonte, J. et al. Quantum Machine Learning. \textit{Nature}, 549, 195-202 (2017). 

\bibitem{ref32} Lloyd, S., Mohseni, M. and Rebentrost, P. Quantum algorithms for supervised and unsupervised machine learning. \textit{arXiv:1307.0411} (2013).

\bibitem{ref33} Lloyd, S., Mohseni, M. and Rebentrost, P. Quantum principal component analysis. \textit{Nature Physics}, 10, 631 (2014).

\bibitem{ref34} Lloyd, S. Capacity of the noisy quantum channel. \textit{Physical Rev. A}, 55:1613--1622 (1997).

\bibitem{ref35} Lloyd, S. The Universe as Quantum Computer, \textit{A Computable Universe: Understanding and exploring Nature as computation}, Zenil, H. ed., World Scientific, Singapore, \textit{arXiv:1312.4455v1} (2013).

\bibitem{ref36} Shor, P. W. Scheme for reducing decoherence in quantum computer memory. \textit{Phys. Rev. A}, 52, R2493-R2496 (1995).

\bibitem{ref37} Chou, C., Laurat, J., Deng, H., Choi, K. S., de Riedmatten, H., Felinto, D. and Kimble, H. J. Functional quantum nodes for entanglement distribution over scalable quantum networks. \textit{Science}, 316(5829):1316--1320 (2007).

\bibitem{ref38} Muralidharan, S., Kim, J., Lutkenhaus, N., Lukin, M. D. and Jiang. L. Ultrafast and Fault-Tolerant Quantum Communication across Long Distances, \textit{Phys. Rev. Lett}. 112, 250501 (2014).

\bibitem{ref39} Yuan, Z., Chen, Y., Zhao, B., Chen, S., Schmiedmayer, J. and Pan, J. W. \textit{Nature} 454, 1098-1101 (2008).

\bibitem{ref40} Kobayashi, H., Le Gall, F., Nishimura, H. and Rotteler, M. General scheme for perfect quantum network coding with free classical communication, \textit{Lecture Notes in Computer Science} (Automata, Languages and Programming SE-52 vol. 5555), Springer) pp 622-633 (2009).

\bibitem{ref41} Hayashi, M. Prior entanglement between senders enables perfect quantum network coding with modification, \textit{Physical Review A}, Vol.76, 040301(R) (2007).

\bibitem{ref42} Hayashi, M., Iwama, K., Nishimura, H., Raymond, R. and Yamashita, S, Quantum network coding, \textit{Lecture Notes in Computer Science} (STACS 2007 SE52 vol. 4393) ed Thomas, W. and Weil, P. (Berlin Heidelberg: Springer) (2007).

\bibitem{ref43} Chen, L. and Hayashi, M. Multicopy and stochastic transformation of multipartite pure states, \textit{Physical Review A}, Vol.83, No.2, 022331, (2011).

\bibitem{ref44} Schoute, E., Mancinska, L., Islam, T., Kerenidis, I. and Wehner, S. Shortcuts to quantum network routing, \textit{arXiv:1610.05238} (2016).

\bibitem{ref45} Lloyd, S. and Weedbrook, C. Quantum generative adversarial learning. \textit{Phys. Rev. Lett}., 121, arXiv:1804.09139 (2018).

\bibitem{ref46} Gisin, N. and Thew, R. Quantum Communication. \textit{Nature Photon.} 1, 165-171 (2007).

\bibitem{ref47} Xiao, Y. F., Gong, Q. Optical microcavity: from fundamental physics to functional photonics devices. \textit{Science Bulletin}, 61, 185-186 (2016).

\bibitem{ref48} Zhang, W. et al. Quantum Secure Direct Communication with Quantum Memory. \textit{Phys. Rev. Lett.} 118, 220501 (2017).

\bibitem{ref49} Enk, S. J., Cirac, J. I. and Zoller, P. Photonic channels for quantum communication. \textit{Science}, 279, 205-208 (1998).

\bibitem{ref50} Briegel, H. J., Dur, W., Cirac, J. I. and Zoller, P. Quantum repeaters: the role of imperfect local operations in quantum communication. \textit{Phys. Rev. Lett.} 81, 5932-5935 (1998).

\bibitem{ref51} Dur, W., Briegel, H. J., Cirac, J. I. and Zoller, P. Quantum repeaters based on entanglement purification. \textit{Phys. Rev. A}, 59, 169-181 (1999).

\bibitem{ref52} Duan, L. M., Lukin, M. D., Cirac, J. I. and Zoller, P. Long-distance quantum communication with atomic ensembles and linear optics. \textit{Nature}, 414, 413-418 (2001).

\bibitem{ref53} Van Loock, P., Ladd, T. D., Sanaka, K., Yamaguchi, F., Nemoto, K., Munro, W. J. and Yamamoto, Y. Hybrid quantum repeater using bright coherent light. \textit{Phys. Rev. Lett.}, 96, 240501 (2006).

\bibitem{ref54} Zhao, B., Chen, Z. B., Chen, Y. A., Schmiedmayer, J. and Pan, J. W. Robust creation of entanglement between remote memory qubits. \textit{Phys. Rev. Lett.} 98, 240502 (2007).

\bibitem{ref55} Goebel, A. M., Wagenknecht, G., Zhang, Q., Chen, Y., Chen, K., Schmiedmayer, J. and Pan, J. W. Multistage Entanglement Swapping. \textit{Phys. Rev. Lett.} 101, 080403 (2008).

\bibitem{ref56} Simon, C., de Riedmatten, H., Afzelius, M., Sangouard, N., Zbinden, H. and Gisin N. Quantum Repeaters with Photon Pair Sources and Multimode Memories. \textit{Phys. Rev. Lett}. 98, 190503 (2007).

\bibitem{ref57} Tittel, W., Afzelius, M., Chaneliere, T., Cone, R. L., Kroll, S., Moiseev, S. A. and Sellars, M. Photon-echo quantum memory in solid state systems. \textit{Laser Photon. Rev.} 4, 244-267 (2009).

\bibitem{ref58} Sangouard, N., Dubessy, R. and Simon, C. Quantum repeaters based on single trapped ions. \textit{Phys. Rev. A}, 79, 042340 (2009).

\bibitem{ref59} Dur, W. and Briegel, H. J. Entanglement purification and quantum error correction. \textit{Rep. Prog. Phys}, 70, 1381-1424 (2007).

\bibitem{ref60} Sheng, Y. B., Zhou, L. Distributed secure quantum machine learning. \textit{Science Bulletin}, 62, 1025-1019 (2017).

\bibitem{ref61} Leung, D., Oppenheim, J. and Winter, A. \textit{IEEE Trans. Inf. Theory} 56, 3478-90. (2010).

\bibitem{ref62} Kobayashi, H., Le Gall, F., Nishimura, H. and Rotteler, M. Perfect quantum network communication protocol based on classical network coding, \textit{Proceedings of 2010 IEEE International Symposium on Information Theory} (ISIT) pp 2686-90. (2010).

\bibitem{qm1} Distante, E. et al. Storing single photons emitted by a quantum memory on a highly excited Rydberg state. \textit{Nat. Commun.} 8, 14072 doi: 10.1038/ncomms14072 (2017).

\bibitem{qm2} Albrecht, B., Farrera, P., Heinze, G., Cristiani, M. and de Riedmatten, H. Controlled rephasing of single 
collective spin excitations in a cold atomic quantum memory. \textit{Phys. Rev. Lett.} 115, 160501 (2015).

\bibitem{qm3} Choi, K. S. et al. Mapping photonic entanglement into and out of a quantum memory. \textit{Nature} 452, 67–71 
(2008).

\bibitem{qm4} Chaneliere, T. et al. Storage and retrieval of single photons transmitted between remote quantum memories. 
\textit{Nature} 438, 833–836 (2005).

\bibitem{qm5} Fleischhauer, M. and Lukin, M. D. Quantum memory for photons: Dark-state polaritons. \textit{Phys. Rev. A} 65, 
022314 (2002).

\bibitem{qmb1} Korber, M. et al. Decoherence-protected memory for a single-photon qubit, \textit{Nature Photonics} 12, 18–21 
(2018).

\bibitem{qmb2} Yang, J. et al. Coherence preservation of a single neutral atom qubit transferred between magic-intensity optical
traps. \textit{Phys. Rev. Lett.} 117, 123201 (2016).

\bibitem{qmb3} Ruster, T. et al. A long-lived Zeeman trapped-ion qubit. \textit{Appl. Phys. B} 112, 254 (2016).

\bibitem{qmb4} Neuzner, A. et al. Interference and dynamics of light from a distance-controlled atom pair in an optical cavity. 
\textit{Nat. Photon.} 10, 303-306 (2016).

\bibitem{qmb5} Yang, S.-J., Wang, X.-J., Bao, X.-H. and Pan, J.-W. An efficient quantum light-matter interface with sub-second
lifetime. \textit{Nat. Photon.} 10, 381-384 (2016).

\bibitem{qmb6} Uphoff, M., Brekenfeld, M., Rempe, G. and Ritter, S. An integrated quantum repeater at telecom wavelength with
single atoms in optical fiber cavities. \textit{Appl. Phys. B} 122, 46 (2016).

\bibitem{qmb7} Zhong, M. et al. Optically addressable nuclear spins in a solid with a six-hour coherence time. \textit{Nature} 517,
177-180 (2015).

\bibitem{qmb8} Sprague, M.R. et al. Broadband single-photon-level memory in a hollow-core photonic crystal fibre. \textit{Nat. Photon.} 
8, 287-291 (2014).

\bibitem{qmb9} Gouraud, B., Maxein, D., Nicolas, A., Morin, O. and Laurat, J. Demonstration of a memory for tightly guided light 
in an optical nanofiber. \textit{Phys. Rev. Lett.} 114, 180503 (2015).

\bibitem{qmb10} Razavi, M., Piani, M. and Lutkenhaus, N. Quantum repeaters with imperfect memories: Cost and scalability. \textit{Phys. 
Rev. A} 80, 032301 (2009).

\bibitem{qmb11} Langer, C. et al. Long-lived qubit memory using atomic ions. \textit{Phys. Rev. Lett.} 95, 060502 (2005).

\bibitem{qmb12} Maurer, P. C. et al. Room-temperature quantum bit memory exceeding one second. \textit{Science} 336, 1283-1286
(2012).

\bibitem{qmb13} Steger, M. et al. Quantum information storage for over 180 s using donor spins in a 28Si semiconductor vacuum.
\textit{Science} 336, 1280-1283 (2012).

\bibitem{qmb14} Bar-Gill, N., Pham, L. M., Jarmola, A., Budker, D. and Walsworth, R. L. Solid-state electronic spin coherence
time approaching one second. \textit{Nature Commun.} 4, 1743 (2013).

\bibitem{qmb15} Riedl, S. et al. Bose-Einstein condensate as a quantum memory for a photonic polarisation qubit. \textit{Phys. Rev. A}
85, 022318 (2012).

\bibitem{qmb16} Xu, Z. et al. Long lifetime and high-fidelity quantum memory of photonic polarisation qubit by lifting Zeeman
degeneracy. \textit{Phys. Rev. Lett.} 111, 240503 (2013).
 
\bibitem{ss1} Chien, J-T. \textit{Source Separation and Machine Learning}, Academic Press (2019).

\bibitem{ss2} Yang, P.-K., Hsu, C.-C., Chien, J.-T., Bayesian factorization and selection for speech and music separation. In: 
\textit{Proc. of Annual Conference of International Speech Communication Association}, pp. 998–1002. (2014). 

\bibitem{ss3} Yang, P.-K., Hsu, C.-C., Chien, J.-T., Bayesian singing-voice separation. In: \textit{Proc. of Annual Conference of 
International Society for Music Information Retrieval} (ISMIR), pp. 507–512. (2014).

\bibitem{ss4} Chien, J.-T., Yang, P.-K., Bayesian factorization and learning for monaural source separation. \textit{IEEE/ACM 
Transactions on Audio, Speech and Language Processing} 24 (1), 185–195. (2016).

\bibitem{nmf1} Bishop, C. M. \textit{Pattern Recognition and Machine Learning}. Springer Science, (2006).

\bibitem{nmf2} Vembu, S. and Baumann, S. Separation of vocals from polyphonic audio recordings. In: \textit{Proc. of ISMIR}, pages
375–378, (2005).

\bibitem{nmf3} Lee, D. D. and Seung, H. S. Algorithms for nonnegative matrix factorization. \textit{Advances in Neural Information
Processing Systems}, 556–562, (2000).

\bibitem{bnmf1} Cemgil, A. T. Bayesian inference for nonnegative matrix factorisation models. \textit{Computational Intelligence and 
Neuroscience}, 785152, (2009).

\bibitem{bnmf2} Schmidt, M. N., Winther, O. and Hansen, L. K. Bayesian non-negative matrix factorization. In: \textit{Proc. of ICA}, 540–547, (2009).

\bibitem{bnmf3} Tibshirani, R. Regression shrinkage and selection via the lasso. \textit{Journal of the Royal Statistical Society. Series 
B}, 58(1):267–288, (1996). 

\bibitem{clus1} Jaiswal, R. et al. Clustering NMF Basis Functions Using Shifted NMF for Monaural Sound Source Separation. \textit{IEEE 
International Conference on Acoustics, Speech and Signal Processing} (ICASSP), (2011).

\bibitem{tens1} FitzGerald, D., Cranitch, M. and Coyle, E. Shifted Nonnegative matrix factorisation for sound source separation,
\textit{IEEE Workshop of Statistical Signal Processing}, Bordeaux, France, (2005).

\bibitem{tens2} Bader, B. W. and Kolda, T. G. MATLAB Tensor Classes for Fast Algorithm Prototyping, Sandia National Laboratories 
Report, SAND2004-5187 (2004).

\bibitem{qtr} Brown, J. C. Calculation of a Constant Q spectral transform, \textit{Journal of the Acoustic Society of America}, vol. 89, 
no.1, pp 425-434, (1991).

\bibitem{four} Quatieri, T. F. \textit{Discrete-Time Speech Signal Processing: Principles and Practice}, Prentice Hall, ISBN-10: 
013242942X, ISBN-13: 978-0132429429 (2002).

\bibitem{snr} Sherrill, C. D. A Brief Review of Elementary Quantum Chemistry, Lecture Notes, web: 
http://vergil.chemistry.gatech.edu/notes/quantrev/quantrev.html (2001).

\bibitem{nadd1} Chakraborty, K., Rozpedeky, F., Dahlbergz, A. and Wehner, S. Distributed Routing in a Quantum Internet, \textit{arXiv:1907.11630v1} (2019).

\bibitem{nadd2} Khatri, S., Matyas, C. T., Siddiqui, A. U. and Dowling, J. P. Practical figures of merit and thresholds for entanglement distribution in quantum networks, \textit{Phys. Rev. Research} 1, 023032 (2019).

\bibitem{nadd3} Kozlowski, W. and Wehner, S. Towards Large-Scale Quantum Networks, \textit{Proc. of the Sixth Annual ACM International Conference on Nanoscale Computing and Communication}, Dublin, Ireland, \textit{arXiv:1909.08396} (2019).

\bibitem{nadd4} Pathumsoot, P., Matsuo, T., Satoh, T., Hajdusek, M., Suwanna, S. and Van Meter, R. Modeling of Measurement-based Quantum Network Coding on IBMQ Devices, \textit{arXiv:1910.00815v1} (2019).

\bibitem{nadd5} Pal, S., Batra, P., Paterek, T. and Mahesh, T. S. Experimental localisation of quantum entanglement through monitored classical mediator, \textit{arXiv:1909.11030v1} (2019).

\bibitem{radd1} Zhu, F., Zhang, W., Sheng, Y. B. and Huang, Y. D. Experimental long-distance quantum secret direct communication. \textit{Sci. Bull}. 62, 1519 (2017).

\bibitem{radd2} Wu, F. Z., Yang, G. J., Wang, H. B., et al. High-capacity quantum secure direct communication with two-photon six-qubit hyperentangled states. \textit{Sci. China Phys. Mech. Astron}, 60, 120313 (2017).

\bibitem{radd3} Chen, S. S., Zhou, L., Zhong, W. and Sheng, Y. B. Three-step three-party quantum secure direct communication, \textit{Sci. China Phys. Mech. Astron}. 61, 090312 (2018).

\bibitem{radd4} Niu, P. H., Zhou, Z. R., Lin, Z. S., Sheng, Y. B., Yin, L. G. and Long, G. L. Measurement-device-independent quantum communication without encryption. \textit{Sci. Bull}. 63, 1345-1350 (2018).

\end{thebibliography}
\end{document}